\RequirePackage{fix-cm}

\documentclass[%
 reprint,
 amsmath,amssymb,
 aps,
pra,
showkeys,
]{revtex4-2}

\usepackage{graphicx}
\usepackage{dcolumn}
\usepackage{bm}

\usepackage{amsthm}
\usepackage{braket}
\usepackage[T1]{fontenc}
\usepackage[utf8]{inputenc}
\usepackage{lmodern}
\usepackage[dvipsnames]{xcolor}
\usepackage{tikz}
\usetikzlibrary{positioning,calc,arrows.meta,shapes.misc,backgrounds,fit}

\usepackage[ruled,vlined,linesnumbered]{algorithm2e}

\usepackage[caption=false]{subfig}

\usepackage{makecell}
\usepackage{booktabs}

\usepackage{hyperref}
\hypersetup{
  colorlinks   = true, 
  urlcolor     = blue, 
  linkcolor    = blue, 
  citecolor    = blue  
}

\newtheorem{lemma}{Lemma}

\DeclareMathOperator*{\argmax}{arg\,max}

\definecolor{Ink}{HTML}{1F2933}
\definecolor{Muted}{HTML}{66717E}
\definecolor{Panel}{HTML}{F8FAFC}
\definecolor{Gain}{HTML}{176B87}
\definecolor{GainDark}{HTML}{0F526A}
\definecolor{GainLight}{HTML}{EAF4F7}
\definecolor{Fail}{HTML}{B33A3A}
\definecolor{FailLight}{HTML}{FBEDEE}
\definecolor{Good}{HTML}{238B57}
\definecolor{GoodLight}{HTML}{ECF7F1}
\definecolor{Check}{HTML}{E67E22}

\definecolor{Static}{HTML}{B42318}
\definecolor{StaticLight}{HTML}{FDECEC}

\definecolor{Dynamic}{HTML}{167C5A}
\definecolor{DynamicLight}{HTML}{EAF7F1}

\definecolor{CheckFill}{HTML}{F3F4F6}

\tikzset{
  panel/.style={draw=black!18,fill=Panel,rounded corners=2mm,line width=0.55pt},
  head/.style={font=\sffamily\bfseries\small,text=Ink},
  sub/.style={font=\sffamily\bfseries\scriptsize,text=Ink},
  tiny/.style={font=\sffamily\scriptsize,text=Ink},
  micro/.style={font=\sffamily\fontsize{6.6}{7.4}\selectfont,text=Ink},
  note/.style={font=\sffamily\scriptsize,text=Muted},
  dq/.style={rounded rectangle,rounded rectangle arc length=110,draw=GainDark,fill=GainLight,text=GainDark,minimum width=12.5mm,minimum height=5.8mm,inner sep=1pt,font=\sffamily\scriptsize\bfseries},
  dqbad/.style={rounded rectangle,rounded rectangle arc length=110,draw=Fail,fill=FailLight,text=Fail,minimum width=12.5mm,minimum height=5.8mm,inner sep=1pt,font=\sffamily\scriptsize\bfseries},
  con/.style={circle,draw=Check!85!black,fill=Check,text=white,minimum size=5.1mm,inner sep=0pt,font=\sffamily\fontsize{6.2}{6.2}\selectfont\bfseries},
  coff/.style={circle,draw=Check,fill=white,text=Check!80!black,minimum size=5.1mm,inner sep=0pt,font=\sffamily\fontsize{6.2}{6.2}\selectfont\bfseries},
  edge/.style={draw=black!48,line width=0.5pt},
  score/.style={draw=black!17,fill=white,rounded corners=1.2mm,inner xsep=4pt,inner ysep=3pt,align=left,font=\sffamily\fontsize{6.8}{7.8}\selectfont,text=Ink},
  rankgood/.style={draw=Good!60,fill=GoodLight,rounded corners=1mm,inner xsep=3pt,inner ysep=1.6pt,font=\sffamily\fontsize{6.6}{7.2}\selectfont\bfseries,text=Good},
  rankbad/.style={draw=Fail!55,fill=FailLight,rounded corners=1mm,inner xsep=3pt,inner ysep=1.6pt,font=\sffamily\fontsize{6.6}{7.2}\selectfont\bfseries,text=Fail},
  rankmid/.style={draw=black!35,fill=white,rounded corners=1mm,inner xsep=3pt,inner ysep=1.6pt,font=\sffamily\fontsize{6.6}{7.2}\selectfont\bfseries,text=Ink},
  eqbox/.style={draw=black!18,fill=white,rounded corners=1.4mm,inner sep=3pt,font=\sffamily\footnotesize,text=Ink}
}

\begin{document}

\preprint{APS/123-QED}

\title{A Highly Accurate Fast Decoding Framework for QLDPC codes Accelerated by Noise Perturbation and Ensemble Decoding}

\author{Mainak Bhattacharyya}
\email{mainak23@iiserb.ac.in}
\author{Ankur Raina}%
 \email{ankur@iiserb.ac.in}
\affiliation{%
 Department of Electrical Engineering and Computer Science, Indian Institute of Science Education and Research Bhopal, Bhopal 462066, India\\
}%




\date{\today}

\begin{abstract}
    A well-balanced decoder has been central to the development of modern fault-tolerant quantum computing.
    However, the inherent topologies of quantum error correcting codes can limit the performance of many well-studied decoding algorithms.
    In this work, we introduce Noise Assisted Ensemble Decoding (NAED), a highly accurate decoding framework with a significant advantage in real-time speed.
    NAED constructs an ensemble of Tanner forests, obtained as acyclic subgraphs of the original Tanner graph, and performs exact inference on each Tanner forest using a lightweight dynamic programming algorithm.
    The forest construction is guided by synthetic soft information derived jointly from the measured syndrome and channel statistics, with controlled noise perturbations generating diverse yet informative decoding matrix column orderings for the Tanner forest construction across the ensemble.
    Our benchmark results show that the proposed synthetic soft information-driven construction and inference on the Tanner forests can achieve improved or comparable decoding performances to the state-of-the-art decoding solutions, such as BP+OSD$0$, while also providing orders-of-magnitude improvements in per-round decoding speed under circuit-level noise. 
\end{abstract}

\keywords{Quantum Error Correction, QLDPC codes, Real-time decoding.}
\maketitle


\section{\label{sec:introduction}Introduction}
Fault-Tolerant (FT) quantum computing has played a significant role in achieving meaningful performance when deploying quantum protocols on practical hardware.
QLDPC codes have been central to this development, and it is deemed to be the most suitable family of quantum error correcting (QEC) codes for achieving low-overhead fault tolerance \cite{breuckmann2021quantum, vasi2026quantum}.
Many attractive properties of QLDPC codes include a sparse decoding graph, efficient distance scaling \cite{panteleev2021quantum, panteleev2022asymptotically, leverrier2022quantum}, constant overhead \cite{gottesman2013fault}, and numerous hardware-native constructions \cite{bravyi2024high, xu2024constant, strikis2023quantum, berthusen2025toward}.
The first use of sparse graphs in QEC was proposed by MacKay \emph{et al.} \cite{mackay2004sparse}.
However, the construction of asymptotically good QLDPC codes is hard in its core and heavily depends on the algebraic aspects \cite{vasi2026quantum}.
One of the most celebrated sub-class of these QLDPC codes are the topological codes \cite{kitaev2003fault}.
The substantial popularity of these codes is due to the planar surface codes \cite{dennis2002topological, fowler2009high}, which pave the way of planar layout for encoding the qubits in superconducting devices.
These topological codes including the infamous variants of surface codes, color codes; shows poor encoding rate \cite{bravyi2010tradeoffs}.
An ideal quantum code construction should offer a constant encoding rate and a linearly scaling minimum distance.
In this direction, the first construction benchmark was set by the hypergraph-product (HGP) codes proposed by Tillich and Z\'{e}mor \cite{tillich2013quantum}.
HGP codes have a constant rate and quadratic minimum distance scaling, which has been the state-of-the-art QLDPC construction for almost a decade.
The first near-linear minimum distance QLDPC construction was proposed by Panteleev and Kalachev \cite{panteleev2021quantum}. 
Also, most recently, a family of codes with vanishing encoding rate, namely the bivariate bicycle (BB) codes, has emerged as one of the leading QLDPC codes suitable for superconducting qubit architectures \cite{bravyi2024high, mandelbaum2025ibm}.\\\\
This tremendous development of good QLDPC codes comes with a significant challenge of balanced real-time decoding.
Most classical LDPC error correction frameworks use Belief Propagation (BP) as a reliable decoding algorithm \cite{richardson2018design} due to its low latency, high accuracy, and hardware-friendly implementation \cite{gallager1962low, mackay1996near, kschischang2001factor}. 
However, it unavoidably fails for almost all the general-purpose QLDPC codes \cite{roffe2020decoding, panteleev2021degenerate}.
BP decoding algorithm primarily suffers from the abundant presence of degenerate error configurations \cite{poulin2008iterative, fuentes2021degeneracy}.
These are popularly studied as harmful trapping sets, intrinsic to every QLDPC codes \cite{raveendran2021trapping}.
An enormous amount of studies have been done to improve the performance of this infamous BP decoding.
The most explored solution involves various post-processing techniques such as Ordered Statistics Decoding (OSD) \cite{panteleev2021degenerate},  Localized Statistics Decoding (LSD) \cite{hillmann2024localized}, Ordered Tanner Forest (OTF) \cite{demarti2026almost}, Ambiguity Clustering (AC) \cite{wolanski2024ambiguity}, and Stabilizer Inactivation \cite{du2022stabilizer}.
Most of these decoders show significant improvement in accuracy over the base BP decoding at the cost of increased run-time.
Some other notable works focus on improving the message-passing dynamics of BP decoders.
These decoders require an understanding of the message passing decoding bottlenecks for QLDPC codes \cite{bhattacharyya2025decoding, yin2024symbreak, chytas2025enhanced, bhatnagar2026impulse}, generating a plethora of pre-processing, heuristics \cite{chytas2025collective, pradhan2023learning} and neural-network inspired solutions \cite{ninkovic2024decoding, maan2025machine, blue2025machine}.\\\\
In this work, we present a new direction for a low-complexity decoding framework for general-purpose QLDPC codes using dynamic programming and ensemble decoding.
Our main contribution is a novel decoding framework, namely the Noise Assisted Ensemble Decoding (NAED).
NAED enables an ensemble of parallel decoding instances, each performing exact inference on a carefully constructed Tanner forest derived from the underlying Tanner graph of a QLDPC code.
The exact inference algorithm is central to achieving highly accurate solutions with no redundant computing iterations.
This setup uses a single iteration of upward message passing on the Tanner forest, with a low-cost dynamic program to exactly infer the least-cost error assignment for a measured syndrome.
This significantly reduces the need for intensive iterative computations such as the Belief Propagation (BP) algorithm.
The number of message-passing iterations of BP required to achieve decoding success is determined heuristically in most cases.
Unlike BP, this exact inference setup in NAED does not require multiple iterations to estimate a solution to the syndrome.
Further, our work shows that good quality solutions to inference on the Tanner forest can be obtained from noise-perturbed soft weight ordered decoding matrix column processing for the Tanner forest construction.
These synthetic weights are obtained solely from the channel and syndrome information.
Such scoring methods can be used as synthetic soft informations and are sufficiently powerful to either beat or match the performance that we can obtain by any BP-based initialization to the inference on the forest.
We perform circuit-level simulations to test the performance of NAED and observe improved decoding accuracy compared to the OSD post-processor for the surface code.
We also study the performance of NAED on BB codes and propose a residual syndrome-aware Tanner forest construction along with different NAED setups tuned for general QLDPC decoding.
Our simulation results show the tremendous error resolution capabilities of the NAED framework.
Further, we benchmark the decoding times per error correction round and observe orders of decoding time improvement for NAED compared to OSD.
This motivates us to propose NAED as a more suitable candidate to the real-time decoding in achieving a practical fault-tolerant gain.

\section{Preliminaries}
\subsection{Quantum Error Correcting Codes}
Different types of Quantum Error Correcting codes (QECC) can be fitted into the general framework of stabilizer codes \cite{gottesman1997stabilizer}.
QLDPC codes are the sparse versions of these stabilizer codes and further falls under a broader class of QECCs, namely the Calderbank-Shor-Steane (CSS) class \cite{calderbank1996good, steane1996multiple}.
CSS is a class of stabilizer codes with two exclusively different types of stabilizer operators.
These are $X$ and $Z$-type of stabilizers, which consists of either the Pauli-$X$ or, the Pauli-$Z$ operators.
The stabilizer operators defines the encoded space of a QECC and has a matrix representation, where each row of the matrix is the symplectic representation of each of the stabilizer operators.
These matrices are called the parity check matrices (PCM) of the QECC.
CSS codes assume a special PCM structure:
\begin{align}
    \label{eq:pcm-css}
    H = \begin{bmatrix}
        H_X & \mathbf{0}\\
        \mathbf{0} & H_Z
    \end{bmatrix},
\end{align}
The non-trivial construction of these CSS QECCs follows from the fact that all the stabilizer operators must commute with each other, which puts a constraint of $H_XH^{T}_Z = \mathbf{0}$.\\\\
Errors on a set of qubits either commute or anti-commute with the stabilizer operators.
The set of undetectable errors, which are not part of the stabilizer group form a set of logical operators.
These operators are responsible for the logical operations on the encoded logical state of a QECC and also reflect the error correction capabilities of the QECC.
A QECC in general is represented using the notation $[[n,k,d]]$, where $n$ is the total number of physical qubits used in the encoding of $k$ logical qubits worth of information and $d$ is the minimum distance of the code.

\subsection{The Decoding Problem in QEC}
The stabilizer operators prepare an encoded logical state, which gets affected by the noise.
A decoder tasks itself to recover the encoded state, which requires determining the incurred error on the encoded state.
A simple decoding problem can be formulated analogous to the classical case by determining the maximal probable error satisfying the syndrome obtained from the measurement of the stabilizers, i.e.
\begin{align}
    \hat{E} = \argmax_{E \in \mathcal{G}_n} P(E | s),
    \label{eq:qmld}
\end{align}
where, $\mathcal{G}_n$ is the set of all possible $n$-qubit Pauli operators.
A key understanding of the decoding problem formulation is the fact that any Pauli error can be decomposed as follows \cite{poulin2006optimal, iyer2015hardness}:
\begin{align}
    E = T \cdot L \cdot S, \,\, : T \in \mathcal{T}, L \in \mathcal{L}, S \in \mathcal{S} \,\,,
\end{align}
where, $\mathcal{S}$, $\mathcal{L}$ and $\mathcal{T}$ are the set of stabilizer generators, the group of logical operators and the group of pure errors, respectively.
The group of total errors $\mathcal{T} := \{T\}$, consists of operators that commute with all the logical operators, all the stabilizer generators except one stabilizer generator, with which it anti-commutes \cite{gottesman1997stabilizer}.
This decomposition particularly reduces the decoding problem of Eq. \eqref{eq:qmld} into
\begin{align}
    \hat{E} = T_s \argmax_{L \in \mathcal{L}, S \in \mathcal{S}} P(L, S | T_s),
    \label{eq:qmld-decomposed}
\end{align}
where we use the fact that the knowledge of the syndrome $s$ is equivalent to the knowledge of $T_s$, because $T_s \in \mathcal{T}$ is the only candidate that anti-commutes with a unique set of stabilizer generators that generate the syndrome $s$.\\\\
This decoding problem, popularly known as the Quantum Maximum Likelihood Decoding (QMLD), is the classical extension of the decoding formulation for the stabilizer codes.
However, the degeneracy unique to the quantum codes poses a unique feature to error recovery.
Errors can be classified by $L, T \in \mathcal{L}, \mathcal{T}$ \cite{iyer2015hardness}.
Any error with the same label $L,T$ has the same effect on the encoded state and therefore is a suitable candidate for the error recovery.
This deems the QMLD suboptimal for the decoding of the stabilizer codes.
A more appropriate decoding problem for QECCs is the Degenerate Quantum Maximum Likelihood Decoding (DQMLD) \cite{iyer2015hardness, demarti2024decoding}, defined as:
\begin{align}
    \hat{E} = E \in \left\{T_s\hat{L}_sS, \,\,\forall S \in \mathcal{S}\right\},
    \label{eq:dqmld}
\end{align}
where the DQMLD is concerned with determining the optimal logical equivalence class of errors, i.e.,
\begin{align}
    \hat{L}_s = \argmax_{L \in \mathcal{L}}\sum_{S \in \mathcal{S}}P(L,S | T_s),
\end{align}
and any member from the set in Eq. \eqref{eq:dqmld} is an optimum solution to the DQMLD problem.
A good decoder for quantum codes must address the DQMLD, which the existing classical decoders are unable to target due to the abundant degeneracies of the quantum codes.
In this work, we propose a decoder that approximately addresses the DQMLD and shows improved performance compared to the decoders approximating the QMLD.
However, before presenting our proposal, we first describe prior work on accurate and efficient decoding of quantum codes, along with related concepts we use in our work.

\subsection{Related Works and Motivation}
Our work is primarily based on the fact that inference on a syndrome spanning tree or a forest is exact.
One of the notable uses of this phenomenon in QLDPC decoding is exploited by the Ordered Tanner Forest (OTF) post-processor \cite{demarti2026almost}.
The authors use a modified version of Kruskal's algorithm to eliminate columns of a decoding matrix, yielding the construction of cycle-free ordered Tanner forests. 
Inference on Tanner forests is performed via iterative message-passing algorithms such as BP.
However, the authors mention that the oscillations in the soft output of BP \cite{gong2024toward} may prevent the OTF post-processor from predicting an error that resolves the syndrome.
Further, the number of BP iterations required to achieve the inference outcome remains heuristic and may introduce unnecessary latency overhead.
This heuristic inference nature of BP is due to the highly degenerate sub-structures present in the QLDPC codes, which portray BP soft information or, the soft log-likelihood ratio (LLR) as redundant.
Despite their capabilities of capturing more information on circuit-level error propagation, the BP soft LLRs might not be able to carry any critical information to assist in selecting good decoding matrix columns for the OTF creation, when the errors are supported over degenerate sub-structures in the Tanner graph.
Therefore, if we can capture this imperfect soft information by some other means and perform the inference in a low-latency time, it will be much more optimal.\\\\
Maximum a posteriori (MAP) inference has other algorithmic incarnations.
Viterbi decoding is one such alternative, which had an overwhelming influence on classical communication. 
Ollivier and Tillich showed the first trellis constructions for the stabilizer codes and employed the min-sum Viterbi algorithm \cite{forney2005viterbi} as one of the solutions to the quantum decoding \cite{ollivier2006trellises}.
This showcases an early use of dynamic programming in quantum decoding.
The min-sum Viterbi algorithm is a dynamic programming shortest-path algorithm on a directed acyclic graph.
It iteratively maintains a state for each vertex in the trellis sections \cite{ollivier2006trellises}, computing only the minimum-cost survivor path and its predecessor, from which the globally optimal path is reconstructed by traceback \cite{lou2002implementing}.
Dynamic programming, therefore, carries an essence for constructing an optimal solution for the global problem from optimal solutions of smaller sub-problems.\\\\
In classical communication, optimal inference on trees covers a wide range of problems, from coding theory \cite{kschischang2002iterative} to artificial intelligence \cite{pearl2014probabilistic}.
These algorithms also pose a recursive message passing structure, and the most efficient version of such implementation forms a two-pass structure \cite{wainwright2003tree}.
This two-pass structure, although does not offer an asymptotic advantage over standard BP message passing, it can be very efficient at avoiding redundant computations in unwanted message-passing iterations.
For instance, when iterative message passing encounters structures such as trapping sets, increasing the number of iterations does not yield any improvements \cite{du2024check, raveendran2021trapping, kumar2012two}.
In those circumstances, an efficient two-pass algorithm significantly reduces decoding latency.
\section{NAED, A Fast Decoding Framework}
We now formally introduce the ensemble dynamic decoding framework for circuit-level decoding of the QLDPC codes.
The dynamic algorithm solves an exact inference on a Tanner forest, which is primarily assisted by noise perturbation.
We call this decoding framework the Noise Assisted Ensemble Decoding (NAED).\\\\
In this work, we are primarily concerned with the circuit-level decoding and, therefore, in the rest of the work, we primarily recover the errors from the detector error model (DEM) as our decoding matrix.
The columns and rows of the DEM decoding matrix are interpreted as the error mechanisms and detectors; instead of the simpler notion of data qubits and stabilizer checks, respectively (see Appendix \ref{ap:circuit-simulation} for more details).
The latter interpretation is related to the case of decoding over the standard QLDPC parity check matrix and throughout the manuscript.
We simultaneously mention this alternate correspondence in brackets, i.e. a Tanner forest node corresponding to the decoding matrix column is expressed as column (data qubit) and the nodes corresponding to the rows of the matrix is expressed as detector (check).\\\\
In the following sections, we describe various techniques and decoding components used in the development of NAED.
\definecolor{NPath}{HTML}{0F766E}
\definecolor{NPathFill}{HTML}{E8F7F4}
\definecolor{NTarget}{HTML}{7C3AED}
\definecolor{NTargetFill}{HTML}{F1EAFE}
\definecolor{NBranch}{HTML}{AEB8C4}
\definecolor{NBranchFill}{HTML}{F7F8FA}

\tikzset{
 npathpanel/.style={draw=black!18,fill=Panel,rounded corners=2mm,line width=.55pt},
 nptitle/.style={font=\sffamily\selectfont\scriptsize,text=Ink,align=center},
 npnote/.style={font=\sffamily\tiny,text=Muted,align=center},
 nscheck/.style={rectangle,rounded corners=1pt,draw=NPath!78!black,fill=white,
           minimum width=6.4mm,minimum height=4.8mm,inner sep=.7pt,
           font=\sffamily\fontsize{6.2}{6.2}\selectfont,text=Ink},
 nstarget/.style={nscheck,draw=NTarget,fill=NTargetFill,text=NTarget,line width=.9pt,font=\sffamily\fontsize{6.2}{6.2}\selectfont\selectfont},
 ndcol/.style={circle,draw=NPath!85!black,fill=NPathFill,minimum size=5.4mm,
           inner sep=0pt,font=\sffamily\fontsize{5.8}{5.8}\selectfont\selectfont,text=NPath!85!black},
 nsbranch/.style={rectangle,rounded corners=1pt,draw=NBranch,fill=NBranchFill,
            minimum width=5.8mm,minimum height=4.4mm,inner sep=.5pt,
            font=\sffamily\fontsize{5.4}{5.4}\selectfont,text=Muted},
 ndbranch/.style={circle,draw=NBranch,fill=NBranchFill,minimum size=4.8mm,inner sep=0pt,
            font=\sffamily\fontsize{5.1}{5.1}\selectfont,text=Muted},
 npathedge/.style={draw=NPath,line width=1.05pt},
 nbranchedege/.style={draw=NBranch,line width=.55pt},
 nlegend/.style={font=\sffamily\scriptsize,text=Ink}
}

\begin{figure*}[t]
\centering
\resizebox{0.95\textwidth}{!}{%
\begin{tikzpicture}[x=1cm,y=1cm]

\begin{scope}[shift={(0,14.55)}]
\node[npathpanel,minimum width=8.75cm,minimum height=4.65cm,anchor=north west] at (0,0) {};
\node[nptitle] at (4.375,-.32) {(a) Base weight ordering, $\tau^0=0$};
\node[npnote] at (4.375,-.7) {$7$ selected columns on the target-pair path};
\begin{scope}[yshift=-0.8cm]
\node[nstarget] (a0) at (.80,-1.45) {$s_{139}$};
\node[ndcol]       (a1) at (1.55,-1.45) {$d_{523}$};
\node[nscheck]       (a2) at (2.30,-1.45) {$s_{91}$};
\node[ndcol]       (a3) at (3.05,-1.45) {$d_{328}$};
\node[nscheck]       (a4) at (3.80,-1.45) {$s_{86}$};
\node[ndcol]       (a5) at (4.55,-1.45) {$d_{439}$};
\node[nscheck]       (a6) at (5.30,-1.45) {$s_{77}$};
\node[ndcol]       (a7) at (6.05,-1.45) {$d_{298}$};
\node[nscheck]       (a8)  at (6.05,-2.55) {$s_{84}$};
\node[ndcol]       (a9)  at (5.30,-2.55) {$d_{319}$};
\node[nscheck]       (a10) at (4.55,-2.55) {$s_{89}$};
\node[ndcol]       (a11) at (3.80,-2.55) {$d_{511}$};
\node[nscheck]       (a12) at (3.05,-2.55) {$s_{137}$};
\node[ndcol]       (a13) at (2.30,-2.55) {$d_{694}$};
\node[nstarget] (a14) at (1.55,-2.55) {$s_{185}$};
\draw[npathedge] (a0)--(a1)--(a2)--(a3)--(a4)--(a5)--(a6)--(a7)--(a8)--(a9)--(a10)--(a11)--(a12)--(a13)--(a14);

\node[ndbranch] (ab1) at (.80,-.78) {$d_{526}$};
\node[nsbranch] (ab2) at (1.45,-.48) {$s_{100}$};
\draw[nbranchedege] (a0)--(ab1)--(ab2);
\node[ndbranch] (ab3) at (6.75,-2.55) {$d_{482}$};
\node[nsbranch] (ab4) at (7.45,-2.55) {$s_{132}$};
\draw[nbranchedege] (a8)--(ab3)--(ab4);
\end{scope}
\end{scope}

\begin{scope}[shift={(9.15,14.55)}]
\node[npathpanel,minimum width=8.75cm,minimum height=4.65cm,anchor=north west] at (0,0) {};
\node[nptitle] at (4.375,-.32) {(b) Ensemble member, $\tau^1=0.035$};
\node[npnote] at (4.375,-.7) {$1$ selected columns on the target-pair path};

\begin{scope}[yshift=-0.8cm]
\node[nstarget] (b0) at (2.40,-2.05) {$s_{139}$};
\node[ndcol]       (b1) at (4.375,-2.05) {$d_{808}$};
\node[nstarget] (b2) at (6.35,-2.05) {$s_{185}$};
\draw[npathedge,line width=1.35pt] (b0)--(b1)--(b2);

\node[ndbranch] (bb1) at (2.40,-1.15) {$d_{523}$};
\node[nsbranch] (bb2) at (1.55,-.80) {$s_{91}$};
\draw[nbranchedege] (b0)--(bb1)--(bb2);
\node[ndbranch] (bb3) at (6.35,-2.95) {$d_{694}$};
\node[nsbranch] (bb4) at (7.20,-3.30) {$s_{137}$};
\draw[nbranchedege] (b2)--(bb3)--(bb4);
\end{scope}
\end{scope}

\begin{scope}[shift={(0,9.35)}]
\node[npathpanel,minimum width=8.75cm,minimum height=4.75cm,anchor=north west] at (0,0) {};
\node[nptitle] at (4.375,-.32) {(d) Ensemble member, $\tau^2=0.19$};
\node[npnote] at (4.375,-.7) {$14$ selected columns on the target-pair path};
\begin{scope}[yshift=-0.8cm]
\node[nstarget] (d0) at (.55,-1.18) {$s_{139}$};
\node[ndcol] (d1) at (1.30,-1.18) {$d_{523}$};
\node[nscheck] (d2) at (2.05,-1.18) {$s_{91}$};
\node[ndcol] (d3) at (2.80,-1.18) {$d_{328}$};
\node[nscheck] (d4) at (3.55,-1.18) {$s_{86}$};
\node[ndcol] (d5) at (4.30,-1.18) {$d_{439}$};
\node[nscheck] (d6) at (5.05,-1.18) {$s_{77}$};
\node[ndcol] (d7) at (5.80,-1.18) {$d_{298}$};
\node[nscheck] (d8) at (6.55,-1.18) {$s_{84}$};
\node[ndcol] (d9) at (7.30,-1.18) {$d_{319}$};
\node[nscheck] (d10) at (7.30,-2.18) {$s_{89}$};
\node[ndcol] (d11) at (6.55,-2.18) {$d_{193}$};
\node[nscheck] (d12) at (5.80,-2.18) {$s_{41}$};
\node[ndcol] (d13) at (5.05,-2.18) {$d_{18}$};
\node[nscheck] (d14) at (4.30,-2.18) {$s_{4}$};
\node[ndcol] (d15) at (3.55,-2.18) {$d_{14}$};
\node[nscheck] (d16) at (2.80,-2.18) {$s_{6}$};
\node[ndcol] (d17) at (2.05,-2.18) {$d_{29}$};
\node[nscheck] (d18) at (1.30,-2.18) {$s_{27}$};
\node[ndcol] (d19) at (.55,-2.18) {$d_{164}$};
\node[nscheck] (d20) at (.55,-3.18) {$s_{75}$};
\node[ndcol] (d21) at (1.30,-3.18) {$d_{426}$};
\node[nscheck] (d22) at (2.05,-3.18) {$s_{123}$};
\node[ndcol] (d23) at (2.80,-3.18) {$d_{475}$};
\node[nscheck] (d24) at (3.55,-3.18) {$s_{130}$};
\node[ndcol] (d25) at (4.30,-3.18) {$d_{680}$};
\node[nscheck] (d26) at (5.05,-3.18) {$s_{178}$};
\node[ndcol] (d27) at (5.80,-3.18) {$d_{950}$};
\node[nstarget] (d28) at (6.55,-3.18) {$s_{185}$};
\draw[npathedge] (d0)--(d1)--(d2)--(d3)--(d4)--(d5)--(d6)--(d7)--(d8)--(d9)--(d10)--(d11)--(d12)--(d13)--(d14)--(d15)--(d16)--(d17)--(d18)--(d19)--(d20)--(d21)--(d22)--(d23)--(d24)--(d25)--(d26)--(d27)--(d28);

\node[ndbranch] (db1) at (.55,-.62) {$d_{526}$};
\node[nsbranch] (db2) at (1.15,-.34) {$s_{100}$};
\draw[nbranchedege] (d0)--(db1)--(db2);
\node[ndbranch] (db3) at (5.05,-.62) {$d_{168}$};
\node[nsbranch] (db4) at (5.75,-.34) {$s_{29}$};
\draw[nbranchedege] (d6)--(db3)--(db4);
\end{scope}
\end{scope}

\begin{scope}[shift={(9.15,9.35)}]
\node[npathpanel,minimum width=8.75cm,minimum height=4.65cm,anchor=north west] at (0,0) {};
\node[nptitle] at (4.375,-.32) {(f) Ensemble member, $\tau^3=0.215$};
\node[npnote] at (4.375,-.7) {$6$ selected columns on the target-pair path};

\begin{scope}[yshift=-0.8cm]
\node[nstarget] (f0) at (1.15,-1.45) {$s_{139}$};
\node[ndcol] (f1) at (1.95,-1.45) {$d_{523}$};
\node[nscheck] (f2) at (2.75,-1.45) {$s_{91}$};
\node[ndcol] (f3) at (3.55,-1.45) {$d_{328}$};
\node[nscheck] (f4) at (4.35,-1.45) {$s_{86}$};
\node[ndcol] (f5) at (5.15,-1.45) {$d_{445}$};
\node[nscheck] (f6) at (5.95,-1.45) {$s_{125}$};
\node[ndcol] (f7) at (5.95,-2.55) {$d_{487}$};
\node[nscheck] (f8) at (5.15,-2.55) {$s_{132}$};
\node[ndcol] (f9) at (4.35,-2.55) {$d_{684}$};
\node[nscheck] (f10) at (3.55,-2.55) {$s_{180}$};
\node[ndcol] (f11) at (2.75,-2.55) {$d_{810}$};
\node[nstarget] (f12) at (1.95,-2.55) {$s_{185}$};
\draw[npathedge] (f0)--(f1)--(f2)--(f3)--(f4)--(f5)--(f6)--(f7)--(f8)--(f9)--(f10)--(f11)--(f12);

\node[ndbranch] (fb1) at (1.15,-.78) {$d_{526}$};
\node[nsbranch] (fb2) at (1.75,-.48) {$s_{100}$};
\draw[nbranchedege] (f0)--(fb1)--(fb2);
\node[ndbranch] (fb3) at (6.72,-1.45) {$d_{670}$};
\node[nsbranch] (fb4) at (7.48,-1.45) {$s_{173}$};
\draw[nbranchedege] (f6)--(fb3)--(fb4);

\end{scope}
\end{scope}

\end{tikzpicture}%
}
\caption{
We show how different scaling factors for the applied noise can help an ensemble explore different Tanner paths connecting a given pair of violated detector (stabilizer check) nodes.
We use a sparse detector error model (DEM) matrix of the distance $7$ rotated planar surface code and sample a syndrome at a fixed circuit error rate $p = 0.003$.
An ensemble of Tanner forests having different noise scaling parameters is generated using Eq. \eqref{eq:noisy-var-weight}.
In this figure, we only show the path connecting a target detector node pair $\{s_{139}, s_{185}\}$ from the Tanner forests of some ensemble member instances.
We can see that varying the noise scaling parameter $\tau^b$ across different ensemble instances $b$ produces distinct forests having a unique path connecting the violated detectors $\{s_{139}, s_{185}\}$.
}
\label{fig:naed-surface-noisy-tanner-paths}
\end{figure*}
\subsection{Weighted Tanner Forest construction}
The first building block of the decoding pipeline requires construction of the Tanner forests from the degenerate, loopy Tanner graph of the quantum codes.
The Tanner graph can represent both the circuit-level and code-capacity decoding matrix of an underlying QLDPC code.\\
Variants of Kruskal's algorithm have been used in many circumstances to efficiently construct forests from a loopy graph \cite{kruskal1956shortest}.
A notable work includes the ordered Tanner forest construction in the BP+OTF decoder \cite{demarti2026almost}.
The construction of Tanner forest can be achieved in almost linear time using a union-find-type data structure (see Appendix A. of Ref. \cite{demarti2026almost}).
The construction proceeds by an ordered exploration of the columns (data qubits) of the decoding matrix.
Any new column (data qubit) that introduces a loop should be discarded.
After all the troubling columns (data qubits) are discarded, the kept columns (data qubits) form a forest.
The forest might contain several disjoint components.
The overall construction of such Tanner forest is primarily dominated by the sorting process of the columns (data qubit) using a set of weights assigned to them, which is at most $\Theta(n\log(n))$.
We use this method as a primary Tanner forest constructor for the surface codes.
Later, we propose a residual syndrome-aware forest construction, which we apply to the BB codes.
We now discuss the noise perturbed weight assignment of the columns (data qubits), which is exploited in the inference and ordered construction of the Tanner forests.
\subsection{Noisy Weight Assignment of Decoding Matrix Columns}
\label{sec:noisy-dataqubit-weight-assignment}
We assign a base weight to each decoding matrix column (data qubits) $q$, based on their channel LLR values $\mathtt{llr}_q$ and the received syndrome information, which is as follows:
\begin{align}
    w_q = -\mathtt{llr}_q + \alpha \sum_{c \in N(q)} (2s_c - 1),
    \label{eq:base-var-score}
\end{align}
where $N(q)$ is the set of neighbor detector (check) nodes of the column (data qubit) $q$, and $s_c$ is the measured syndrome bit value of detector (stabilizer check) node $c$.
The second term in the above equation is a signed value corresponding to the syndrome of the detector (stabilizer check) $c$.
For instance $2s_c - 1 = +1$, if $s_c = 1$ and it is $-1$, if $s_c = 0$.
The channel LLRs initially prefers the value of columns (data qubits) to be $x_q = 0$, if $\mathtt{llr}_q > 0$. 
This case is valid for all the instances, where the physical error probability of an error mechanism corresponding to the column (data qubit) $q$ adheres to $p_q < \frac{1}{2}$.
The second term serves as a heuristic reliability score based on the syndrome-driven single-column (data qubit) flip gain for column (data qubit) $q$.
We now argue that $w_q$ is a good alternative to the BP-based soft information.\\\\
Consider the decoding problem as an optimization of the cost objective:
\begin{align}
    f(\mathbf{x}) = \alpha\sum_{c=1}^{m}r_c(\mathbf{x}) + \sum_{q=1}^{n}\mathtt{llr}_qx_q,
\end{align}
where $r_c(\mathbf{x})$ is the residual syndrome value of the detector (check) node $c$ and $x_q \in \mathbf{x}$ are the column (data qubit) values.
Now if we flip a column (data qubit) node $q$, then all the syndrome values of the corresponding adjacent detector (check) nodes ($c \in N(q)$) toggle, i.e. $r_c(\mathbf{x}^\prime) = 1 - r_c(\mathbf{x})$, where $\mathbf{x}^\prime$ represents the column (data qubit) values with the $q^{th}$ column (data qubit) value flipped, compared to the initial column (data qubit) values $\mathbf{x}$.
The change in the objective function due to a single flip at column (data qubit) $q$ is
\begin{align*}
     f(\mathbf{x}) - f(\mathbf{x}^\prime) = \alpha \sum_{c \in N(q)}(2r_c - 1) + \mathtt{llr}_q(2x_q - 1).
\end{align*}
Therefore, if we consider the initial condition $\mathbf{x} = \mathbf{0}$ and $r = s$, then the above equation becomes:
\begin{align}
     f(\mathbf{x} = 0) - f(\mathbf{x}^\prime) = \alpha \sum_{c \in N(q)}(2s_c - 1) -\mathtt{llr}_q.
\end{align}
Therefore, we see that the base score $w_q$ has a rigorous interpretation, i.e. $w_q = f(\mathbf{x} = \mathbf{0}) - f(\mathbf{x}^\prime)$.
Now the LLR equivalent of $w_q$ is simply $\lambda_q = -w_q$.
For instance $\lambda_q > 0$ means $f(\mathbf{x}^\prime) > f(\mathbf{x} = \mathbf{0})$; which imply flipping column (data qubit) $q$ increases the objective value.
Therefore, it is preferred to keep the column (data qubit) $q$ in the initial state.
On the other hand, $\lambda_q < 0$ means $f(\mathbf{x}^\prime) < f(\mathbf{x} = \mathbf{0})$, which implies that a flip of column (data qubit) $q$ will further minimize the objective and therefore is a more preferable option.
Therefore, the base weight for each column (data qubit) $w_q$ gives a good estimate of the preferable vulnerable columns (data qubits) from the syndrome and channel information, on a similar note to the purpose that the BP soft LLRs serve.
BP soft LLRs quantify how strongly the decoder must prefer the column (data qubit) $q$ to be flipped.
A positive soft LLR prefers the column (data qubit) to remain in $x_q = 0$, and a negative soft LLR prefers the flip, i.e., $x_q = 1$.
We note that these base scores from Eq. \eqref{eq:base-var-score} do not fully capture the information of BP soft LLRs.
BP soft LLRs can capture much stronger error propagation characteristics when the errors are not supported over any harmful column (data qubit) subsets.
However, when errors are supported on structures like quantum trapping sets, these soft LLRs are highly symmetric and oscillatory.
Under such circumstances, BP soft LLRs do not help infer good solutions for quantum decoding.
In our work, we do not identify BP as a primary source of inference.
Instead, we require a set of soft information to prioritize the columns (data qubits) in the construction of the Tanner forests.
Therefore, the base scores serve as an alternative to the `noisy' information captured by BP under the above-mentioned harmful scenarios.
The main purpose of these base scores $w_q$ from Eq. \eqref{eq:base-var-score} is to identify the critical column (data qubit) nodes for the later stages of inference.
These serve the approximate purpose of prioritizing a set of column (data qubit) nodes that might be involved in the generation of the received syndrome.\\\\
The base score alone is not sufficient to find an optimal column (data qubit) priority set for the construction of a good Tanner forest that either supports the syndrome or a valid logical equivalence class.
The nature of the base score is deterministic.
Consider the case, when a set of columns (data qubits) has been assigned the same base weights, and during the ordered processing of the columns (data qubits) one of the columns (data qubits) could not be prioritized due to the equal weight with other likely columns (data qubits) and a pre-assigned default order for such equal weight column (data qubit) processing.
Under those circumstances, we can never retrieve an essential column (data qubit) into the forest due to deterministic $w_q$.
To overcome such a limitation of the base weight, we need to perturb the base weights for the construction of a diverse profile of forests.
A diverse order can create a diverse forest profile and enable the exact inference procedure to explore different solution choices for the received syndrome, i.e., find a set of column (data qubit) nodes such that $s \in \mathrm{Image}(H_{\mathrm{F}})$.
Once we find a set of columns (data qubits) that satisfy this condition, the inference algorithm will always find the exact solution to the syndrome.
Further, this seeks an implementation of inference on multiple independent forests, which further perfectly fits into the ensemble decoding framework.
On each decoding instance of the ensemble $b$, we perturb the base score for each column (data qubit) by an i.i.d Gaussian noise.
The noise perturbed column (data qubit) scores are as follows: 
\begin{align}
    n^{b}_q = w_q + \tau^{b}\varepsilon_{q},
    \label{eq:noisy-var-weight}
\end{align}
where, $\varepsilon_{q}$ is the i.i.d. sampled Gaussian noise and $\tau^{b}$ is a scaling factor for the applied noise.
The scaling factor is varied across different decoding instances of the ensemble to achieve a diverse profile for the forests, see Fig. \ref{fig:naed-surface-noisy-tanner-paths}. 
For the ordered forest construction, we use $n^{b}_q$ as the sorted priority weights.
On the other hand, $\lambda^{\mathrm{noise}}_q = - n^{b}_q$ is used as the synthetic soft information for the inference stage of the decoding.
Less positive (more negative) $\lambda^{\mathrm{noise}}_q$ encourages prior selection of the corresponding column (data qubit) node $q$ compared to the columns (data qubits) with more positive (less negative) $\lambda^{\mathrm{noise}}_q$.


\begin{figure*}[t]
\centering

\resizebox{0.9\textwidth}{!}{%
\tikzset{
    data/.style={
        circle,
        draw=Ink!75,
        fill=white,
        minimum size=8.4mm,
        inner sep=0pt,
        font=\scriptsize\selectfont,
        text=Ink
    },
    dataStatic/.style={
        data,
        draw=Static,
        fill=StaticLight,
        line width=0.95pt
    },
    dataDynamic/.style={
        data,
        draw=Dynamic,
        fill=DynamicLight,
        line width=0.95pt
    },
    dataMuted/.style={
        data,
        draw=Muted!40,
        fill=white,
        text=Muted!55
    },
    check/.style={
        rectangle,
        rounded corners=1.2pt,
        draw=Ink!70,
        fill=CheckFill,
        minimum width=9.5mm,
        minimum height=6.5mm,
        inner sep=1pt,
        font=\scriptsize,
        text=Ink
    },
    checkHot/.style={
        check,
        draw=Static,
        fill=StaticLight,
        line width=0.8pt
    },
    checkGood/.style={
        check,
        draw=Dynamic,
        fill=DynamicLight,
        line width=0.8pt
    },
    checkMuted/.style={
        check,
        draw=Muted!35,
        fill=white,
        text=Muted!55
    },
    edge/.style={
        draw=Ink!58,
        line width=0.8pt
    },
    cycleedge/.style={
        draw=Static,
        line width=1.6pt
    },
    proposed/.style={
        draw=Static,
        line width=1.0pt,
        dashed
    },
    goodedge/.style={
        draw=Dynamic,
        line width=1.35pt
    },
    mutededge/.style={
        draw=Muted!30,
        line width=0.7pt,
        dashed
    },
    ordernode/.style={
        rounded corners=1.5pt,
        draw=Ink!50,
        fill=white,
        minimum width=13mm,
        minimum height=6.5mm,
        inner sep=1.5pt,
        font=\scriptsize
    },
    orderStatic/.style={
        ordernode,
        draw=Static,
        fill=StaticLight,
        text=Static
    },
    orderDynamic/.style={
        ordernode,
        draw=Dynamic,
        fill=DynamicLight,
        text=Dynamic
    },
    ordarrow/.style={
        ->,
        densely dotted,
        draw=Muted,
        line width=0.95pt
    },
    smallnote/.style={
        font=\scriptsize,
        text=Muted,
        align=center
    },
    paneltitle/.style={
        font=\small\selectfont,
        text=Ink,
        align=center
    },
    callStatic/.style={
        draw=Static,
        fill=StaticLight,
        rounded corners=2pt,
        text=Static,
        font=\scriptsize\selectfont,
        align=center,
        inner sep=3pt
    },
    callDynamic/.style={
        draw=Dynamic,
        fill=DynamicLight,
        rounded corners=2pt,
        text=Dynamic,
        font=\scriptsize\selectfont,
        align=center,
        inner sep=3pt
    }
}
\begin{tikzpicture}[font=\small, >={Latex[length=2.1mm]}]

\begin{scope}[shift={(2.875,6.80)}]

    \node[
        font=\scriptsize\selectfont,
        text=Static,
        anchor=east
    ] at (1.20,-0.10)
    {
        static
    };

    \node[ordernode]
        (o300)
        at (2.05,-0.10)
        {$d_{300}$};

    \node[ordernode]
        (o4961)
        at (3.75,-0.10)
        {$d_{4961}$};

    \node[orderStatic]
        (o214)
        at (5.45,-0.10)
        {$d_{214}$};

    \node[
        font=\large,
        text=Muted
    ] (dotsS) at (7.05,-0.10)
    {
        $\cdots$
    };

    \node[orderStatic]
        (o213s)
        at (8.55,-0.10)
        {$d_{213}$};

    \draw[ordarrow]
        (o300) -- (o4961);

    \draw[ordarrow]
        (o4961) -- (o214);

    \draw[ordarrow]
        (o214) -- (dotsS);

    \draw[ordarrow]
        (dotsS) -- (o213s);

    \node[smallnote, anchor=north]
        at (2.05,-0.46)
        {order $79$};

    \node[smallnote, anchor=north]
        at (3.75,-0.46)
        {order $80$};

    \node[
        smallnote,
        text=Static,
        anchor=north
    ] at (5.45,-0.46)
        {order $81$};

    \node[
        smallnote,
        text=Static,
        anchor=north
    ] at (10,-0.46)
        {order $133$; blocked};

    \node[
        font=\scriptsize\selectfont,
        text=Dynamic,
        anchor=east
    ] at (1.20,-2.40)
    {
        dynamic
    };

    \node[ordernode]
        (o1728)
        at (2.40,-2.4)
        {$d_{1728}$};

    \node[orderDynamic]
        (o213d)
        at (4.30,-2.4)
        {$d_{213}$};

    \node[ordernode]
        (o626)
        at (6.20,-2.4)
        {$d_{626}$};

    \node[ordernode]
        (o2651)
        at (8.10,-2.4)
        {$d_{2651}$};

    \draw[ordarrow]
        (o1728) -- (o213d);

    \draw[ordarrow]
        (o213d) -- (o626);

    \draw[ordarrow]
        (o626) -- (o2651);

    \node[
        smallnote,
        anchor=north
    ] at (2.40,-2.75)
        {step $11$};

    \node[
        smallnote,
        text=Dynamic,
        anchor=north
    ] at (4.30,-2.75)
        {step $12$};

    \node[
        smallnote,
        anchor=north
    ] at (6.20,-2.75)
        {step $13$};

    \node[
        smallnote,
        anchor=north
    ] at (8.10,-2.75)
        {step $14$};

\coordinate (gainmid) at (6.55,-1.50);

\draw[
    ->,
    draw=Dynamic,
    line width=1.15pt
]
    ($(o213s.south)+(0,-0.12)$)
    ..
    controls
        ($(gainmid)+(1.25,0.15)$)
        and
        ($(gainmid)+(-1.25,0.15)$)
    ..
    ($(o213d.north)+(0,0.12)$);

\node[
    callDynamic
] (gainlabel) at (gainmid)
{
    gain promotes $d_{213}$
};


    \node[paneltitle, text width=14cm, anchor=north, align=center]
        at (4.875,-3.62)
    {
        (a) Residual-syndrome-aware gain changes the effective column priority.
    };

\end{scope}


\begin{scope}[shift={(1.475,-2.70)}]

    \node[checkHot] (s25)  at (0.65,2.45) {$s_{25}$};
    \node[checkHot] (s26)  at (4.15,2.45) {$s_{26}$};

    \node[check]    (s114) at (2.40,0.20) {$s_{114}$};
    \node[checkHot] (s42)  at (2.40,4.15) {$s_{42}$};

    %
    \node[data] (s214) at (2.40,1.40) {$d_{214}$};

    \draw[cycleedge]
        (s25) -- (s214) -- (s26);

    \draw[edge]
        (s214) -- (s114);

    \node[
        smallnote,
        anchor=north,
        align=center
    ] at (2.40,-0.1)
    {
        $d_{214}$ is processed and accepted at order $81$
    };

    %
    \node[dataStatic] (s213) at (2.40,3.15) {$d_{213}$};

    \draw[cycleedge]
        (s25) -- (s213) -- (s26);

    \draw[proposed]
        (s213) -- (s42);



    \node[
        font=\scriptsize,
        text=Static,
        anchor=center,
        align=center
    ] at (2.40,4.8)
    {
        $d_{213}$ is processed at order $133$,\\ forms a Tanner cycle
    };


    \node[paneltitle, text width=6.9cm, anchor=north]
        at (2.40,-1.45)
    {
        (b) Static ordering:\\ column $d_{214}$ is accepted first, leaving $d_{213}$ cycle-blocked.
    };

\end{scope}

\draw[
    Muted!25,
    line width=0.75pt
]
    (7.75,1.95)
    --
    (7.75,-4.18);

\begin{scope}[shift={(9.225,-2.70)}]

    \node[checkGood]  (d25)  at (0.65,2.45) {$s_{25}$};
    \node[checkGood]  (d26)  at (4.15,2.45) {$s_{26}$};
    \node[checkGood]  (d42)  at (2.40,4.15) {$s_{42}$};

    \node[checkMuted] (d114) at (2.40,0.20) {$s_{114}$};

    \node[dataDynamic] (d213) at (2.40,3.15) {$d_{213}$};

    \draw[goodedge]
        (d213) -- (d25);

    \draw[goodedge]
        (d213) -- (d26);

    \draw[goodedge]
        (d213) -- (d42);

    \node[
        font=\scriptsize,
        text=Static,
        anchor=center,
        text=Dynamic,
    ] at (2.40,4.7)
    {
        Adding $d_{213} \implies |r|:27\rightarrow24$
    };

    \node[dataMuted] (d214) at (2.40,1.40) {$d_{214}$};

    \draw[mutededge]
        (d214) -- (d25);

    \draw[mutededge]
        (d214) -- (d26);

    \draw[mutededge]
        (d214) -- (d114);



    \node[
        smallnote,
        anchor=north,
        align=center,
        text width=9cm
    ] at (2.40,-0.42)
    {
        $G_{213}=3:
        \rightarrow
        n_{213}
        =
        -1.737 + 2 \times 3
        =
        4.263$;
        \\

        after $d_{213}$ added in F:\\
        $G_{214}=-3:
        \rightarrow
        n_{214}=-1.596-6=-7.596$.
    };

    \node[paneltitle, text width=8.0cm, anchor=north]
        at (2.40,-1.72)
    {
        (c) Dynamic ordering:\\ Column $d_{213}$ jumps forward.
    };

\end{scope}

\end{tikzpicture}%
}

\caption{
In this figure, we show how the dynamic gain term in Eq. \eqref{eq:gain1} helps select good columns of a decoding matrix for general purpose QLDPC codes, where the column weights of the decoding matrix exceeds $2$.
We sample a syndrome from the circuit level noise over the BB code $[[144,12,12]]$ at a fixed circuit noise rate $p = 0.003$ and the decoding happens over the sparse detector error model (DEM) of the BB code.
Under the static case, i.e. $\beta = 0$, the orders of the sorted columns are fixed for the forest construction of each ensemble instance.
In (a) we show the sorted column order of one such instance labeled as `static'.
However, under the dynamic forest construction, we set $\beta = 2$ and the order of the sorted columns changes after each new column inclusion step to the forest $F$.
We show a typical instance of a set of columns that has been added in one of the ensemble instance during this dynamic residual-syndrome aware forest construction in (a) and label it as `dynamic'.
We use $d$ and $s$ to identify the columns and rows of the sparse DEM.
Under the static ordering, column $d_{213}$ is processed at $133^{th}$ order.
This creates a cycle-blocking instance during the forest construction, shown in (b). 
The reason behind this is that another column $d_{214}$ was earlier in the order and therefore got processed and included into the forest earlier during the construction.
In (c) we show that the promotion of the column $d_{213}$ in the processing order helps it being accepted before $d_{214}$ as the priority score of $d_{214}$ is decreased significantly and when it finally gets processed, the cycle blocking applies to $d_{214}$ and gets rejected from the inclusion into forest $F$.
}
\label{fig:static-cycle-dynamic-gain-rescue}

\end{figure*}
\subsection{Noisy weight assignment for BB code Decoding Matrix Columns}
\label{sec:bb-qubit-weight}
The ordered forest construction using the noise-perturbed base weight, discussed in the previous Section \ref{sec:noisy-dataqubit-weight-assignment} works tremendously for the surface codes, as we will see later from the simulation outcomes.
However, for QLDPC codes with more structures, such as the BB codes, this alone is not sufficient.
For instance, the BB code parity check matrix has weight $3$ columns, and the forests constructed according to the priority assigned by Eq. \eqref{eq:noisy-var-weight} can sometimes fail to span the syndrome.
On the other hand, for surface codes, the graph-like error mechanisms guarantee a syndrome spanning forest (for a detailed analysis see Appendix \ref{ap:syndrome-forest}).\\\\
On this note, we propose a new heuristic residual syndrome-aware online forest construction method, which results in `good' forest construction.
The base weight assignment and the noise perturbation is applied similar to what we used in the previous Section \ref{sec:noisy-dataqubit-weight-assignment}.
We assign the base weight $w_q$ to each column (data qubit) $q$, from Eq. \eqref{eq:base-var-score}.
Further, for each ensemble decoding instance of the decoder $b$, we assign a syndrome and ensemble seeded Gaussian noise $\varepsilon^{b}_q$ to the weights of the columns (data qubits) $q$.
The syndrome and ensemble instance are used as keys to construct deterministic random noise.
These deterministic seeding merely helps in debugging and benchmarking of the decoding framework and have nothing to do with the forest quality.
In fact, we can directly adapt the noise perturbation method used in the previous Section \ref{sec:noisy-dataqubit-weight-assignment}, without losing any performance gain.
For instance, in the present scenario, the seeded noise perturbation term $\varepsilon^{b}_q$ serves the same purpose of exploring different forest candidates in the ensemble, as it was accomplished using the parameter $\tau^b$ in Section \ref{sec:noisy-dataqubit-weight-assignment}.
The forest construction without these noise perturbed scores will be deterministic for all the ensemble instances and is undesirable for the greedy exploration of good inference solution choices.\\\\
We now propose a slightly different forest construction sub-routine, in-addition to what we used for the surface code.
We maintain the disjoint-set union find data structure to check if any newly added columns (data qubits) to the forest induce a cycle or not, i.e. by checking whether the neighborhood of the new column (data qubit) belongs to a distinct connected component.
However, unlike the previous construction, the column (data qubit) weights are updated continuously during the forest construction.
This crucial part is where we exploit the residual syndrome-aware reweighting.
Consider column (data qubit) $q$ as the most recent candidate for the forest that does not create a cycle.
The neighborhood of column (data qubit) $q$ is defined as $N(q) = \{c : H_{cq} = 1\}$.
We initialize the residual as $r:=s$.
Now adding column (data qubit) $q$ to the forest implies that an error supported on the column (data qubit) $q$ toggles all the neighboring syndrome values, i.e., the residual of the syndrome becomes
\begin{align}
    r \gets r \oplus H_{:q},
    \label{eq:change-residual}
\end{align}
where, $H_{:q}$ represents the $q^{\mathrm{th}}$ column of the decoding matrix $H$.
The intuition behind this argument is that a residual update like this imply that the column (data qubit) $q$ can support the violated detectors (stabilizer checks) if one of them belongs in the same set of $c \in N(q)$.
Therefore, the above Eq. \eqref{eq:change-residual} motivates us to define dynamic gain due to the addition of the column (data qubit) $q$ as
\begin{align}
    G_{q^\prime}(r) = |r| - |r \oplus H_{:q}|,
    \label{eq:gain1}
\end{align}
where the $||$ notation denotes the cardinality or, the total number of non-zero elements.
The dynamic gain $G_q(r)$ signifies the net reduction in residual syndrome weight due to the addition of column (data qubit) $q$ into the Tanner forest.
We can further simplify $G_q(r)$.
Consider $a_q(r) = \sum_{c \in N(q)}r_c$ is the total number of violated stabilizer detectors (checks) adjacent to column (data qubit) $q$, i.e. $a_q(r)$ is the total number of nonzero elements in the residual due to $q$.
Therefore, the total number of zero-elements in the residual, which will be flipped due to the addition of $q$, is $|N(q)| - a_q(r)$.
Using these two quantities, the second term of Eq. \eqref{eq:gain1} becomes
\begin{align}
    |r \oplus H_{:q}| = |r| - a_q(r) + |N(q)| - a_q(r)
\end{align}
Therefore Eq. \eqref{eq:gain1} can be written as
\begin{align}
    G_q(r) = 2a_q(r) - |N(q)|.
\end{align}
Only columns (data qubits) sharing at least one detector (check) with $q$ will have their corresponding 
$a_q(r)$ changed subsequently.
Hence their corresponding dynamic gain $G_q(r)$ will also get affected simultaneously, due to the addition of $q$ into the forest.
Therefore, for each new addition of a valid column (data qubit) $q$ into the forest, the total weight of the columns (data qubits) changes only for the set $\mathcal{A}(q) = \bigcup_{c \in N(q)}N(c)$.
The resultant updated total weight for each column (data qubit) is defined as:
\begin{align}
    n^{b}_q = \kappa w_q + \tau \varepsilon^{b}_q + \beta G_q(r).
    \label{eq:var-weight-bbcode}
\end{align}
This residual syndrome-aware forest construction, repeatedly choose the highest-weighted column (data qubit) during each new inclusion into the forest, such that no cycles form.
This greedy forest construction successfully spans the syndrome if the residual $r = 0$, otherwise it cannot be guaranteed to support the syndrome and an exact inference on the resultant forest can fail.\\\\
In Fig. \ref{fig:static-cycle-dynamic-gain-rescue}, we show a typical example, where the residual-aware forest construction driven by the dynamic gain can generate better syndrome spanning forests compared to the static case (i.e. $\beta = 0$ in Eq. \eqref{eq:var-weight-bbcode}).
In Fig. \ref{fig:static-cycle-dynamic-gain-rescue}(a), we show the effects of the dynamic gain on the column ordering compared to the static case.
The dynamic forest construction promotes the column $d_{213}$ in the processing order of the Tanner forest construction.
Whereas in the static case, the column $d_{214}$ gets processed earlier and when the construction encounters the column $d_{213}$ at later stages of the construction, it gets rejected as it forms a Tanner cycle.
Comparing the two columns $d_{213}$ and $d_{214}$; column $d_{213}$ is more important for the Tanner forest construction as it can span a self contained subset of the syndrome supported on the violated detector subset $\{s_{25}, s_{26}, s_{42}\}$.
However, $d_{214}$ can only partially support it.
The effect of adding column $d_{214}$ early instead of $d_{213}$, is the primary reason behind the non-guaranteed nature of a syndrome spanning forest construction for an arbitrary ordered processing of the decoding matrix columns (see Appendix \ref{ap:syndrome-forest} for a more detailed discussion).
Note that the term self contained subset of the syndrome represents a set of violated detector nodes, such that all of them belong to a single DEM column, which has all the adjacent detector nodes violated.
Later in Section \ref{sec:results} we show that in almost $> 98\%$ cases for the simulated BB codes, our residual syndrome-aware forest construction method is able to find syndrome spanning forests for a given circuit noise rate.
\subsection{Exact Inference on Tanner Forest using Dynamic Algorithm}
Inference on forests are exact.
An exact inference essentially solves the following optimization problem:
\begin{align}
    \arg \min_{x_q \in \mathbf{x}}\sum_{q}\mathtt{C}_qx_q : H_{F}\mathbf{x} = \mathrm{s},
    \label{eq:tree-optimization}
\end{align}
where, $H_{\mathrm{F}}$ is the bipartite decoding matrix extension for the Tanner forest $F$ and $\mathtt{C}_q$ is the cost of assigning a binary assignment $x_q = 1$ for column $q$.
The Tanner forest is constructed from the Tanner graph of the quantum code using the methods discussed in Section \ref{sec:noisy-dataqubit-weight-assignment} and \ref{sec:bb-qubit-weight}.
We now propose a two-stage dynamic program for the exact inference of the optimization problem in Eq. \eqref{eq:tree-optimization}.
The first essential requirement for the dynamic program to work is establishing a parent-child relationship between the column (data qubit) nodes and detector (check) nodes of the forest.
This assists in the upward pass of the dynamic program.
A depth-first search (DFS) algorithm \cite{tarjan1972depth} can be employed to construct such set of information.
This also converts each Tanner forest of the ensemble into a collection of rooted computation trees.
In the traditional DFS, information about new node visitations is sufficient.
However, in our work we store the information about the parent-child relationships whenever a new node is discovered.
During the DFS algorithm, whenever a child node $c$ is discovered from node $p$, we store the parent information of node $c$, i.e. $\mathtt{par}(c$) = $p$.
Simultaneously, we store the order of these new child node visitations during the DFS traversal.
Finally, for each node $p$, from the stored order, we construct a list of children nodes, i.e. $\mathtt{ch}(p) = N(p) \backslash \mathtt{par}(p)$.
Essentially, this allows an upward message passing from the leaf nodes to the root node on a directed tree component of the Tanner forest.
Therefore, the flow of messages occur in the reverse direction of the stored order, i.e. the upward message from any node in the forest is sent to its stored parent node.\\\\
Note that as the Tanner forest may contain several disjoint tree components, each connected tree component is decoded independently.
The DFS first explores one of the connected components and then traverses to another connected component.
The DFS output is stored separately for each of the disconnected components.
Also, by default, our DFS setup assumes the first non-visited column (data qubit) node as the root of the corresponding component.
We now discuss the upward pass procedure of the exact inference.
\subsubsection{The Upward Message Passing}
\label{sec:upward-pass}
The upward pass algorithm on a connected tree component essentially solves a smaller sub-tree optimization on the fly and later assists in the execution of a downward traceback of the optimal inference solution based on the decision at the root column (data qubit) node of the connected tree component.
Performing the upward pass on all the connected components of the forest optimizes Eq. \eqref{eq:tree-optimization} for the total forest $F$.\\
For the upward pass algorithm, we send a set of two entry messages $M_{c \rightarrow p}(x_p)$, from 
a child node $c$ to its parent node $p$ in the connected tree component.
We use $x_p$ to represent the bit-value assumed by a column (data qubit) node $p$.
The message set represents the minimum cost assignment of the column (data qubit) nodes of the entire sub-tree below the child node $c$, for each value assumed by a boundary column (data qubit) node, i.e. if the parent node is a column (data qubit), then for each value of the parent column (data qubit) $p$, such that $x_p \in \{0,1\}$ a message information is passed from $c$ to $p$.
Consider the case where the message is being sent from a column (data qubit) node $q$ to a parent detector (check) node $c$.
An obvious implication to the sent messages is that these messages associate a cost for assigning a value to the boundary column (data qubit) node $q : x_q \in \{0,1\}$.
These messages sent from a child column (data qubit) node $q$ to its parent detector (check) node $c$ is as follows:
\begin{align}
    M_{q \rightarrow c}(x_q) = \mathtt{C}_qx_q + \sum_{a \in \mathtt{ch}(q)}m_{a \rightarrow q}(x_q),
    \label{eq:msg_var_to_check}
\end{align}
where, $M_{q \rightarrow c}(x_q = 0)$ and $M_{q \rightarrow c}(x_q = 1)$ represents the cost of assigning the column (data qubit) values to $x_q = 0$ and $x_q = 1$ respectively.
Now consider the messages sent from a child detector (check) node to the parent column (data qubit) node.
It is defined as:
\begin{align}
    \label{eq:msg_check_to_var}
    M_{c \rightarrow q}(x_q) &= \min_{x_u : u \in \mathtt{ch}(c)}\sum_{u \in \mathtt{ch}(c)}M_{u \rightarrow c}(x_u),\\
    \text{constrained to}\,\, s_c &= x_q \oplus_{u \in \mathtt{ch}(c)} x_u.
    \notag
\end{align}
The exhaustive evaluation of Eq. \eqref{eq:msg_check_to_var} would require enumeration over $2^{|\mathtt{ch}(c)|}$ possible child column (data qubit) assignments for each value $x_q \in \{0,1\}$ of the parent column (data qubit) node $q$.
Therefore, for both the values of $x_q$, the detector (check) to column (data qubit) node messages remains a $\Theta(2^{|\mathtt{ch}(c)|})$ process.
However, we can perform a more efficient dynamic program for this binary parity-constrained optimization of Eq. \eqref{eq:msg_check_to_var}.
For any detector (check) node $c$, a two-state dynamic parity (DP) program can be employed as follows:
\begin{enumerate}
    \item The DP program essentially tracks the minimum cost child column (data qubit) assignments satisfying a required parity.
    \item We associate a dynamic parity cost with each possible parity due to the child column (data qubit) assignments.
    These parity assignment costs are updated on the fly whenever a new child column (data qubit) node assumes values from $\{0,1\}$.
    Suppose, for a particular detector (check) node $c$, the dynamic parity cost after adding the $i^{th}$ child column (data qubit) node is defined as:
    \begin{align}
    D_i(p) =
    \min_{\substack{
    x_1,\ldots,x_i \\
    x_1 \oplus \cdots \oplus x_i = p
    }}
    \sum_{j=1}^{i} M_{j \rightarrow c}(x_j),
    \label{eq:parity-dp}
    \end{align}
    where $p$ is the parity of all the processed child columns (data qubits).
    The above equation represents the cost of two possible child node parities, $p= 0$ and $p = 1$. 
    Before, processing any child nodes (i.e. $i = 0$), the dynamic parities are initialized as
    \begin{align*}
        D_0(0) = 0, D_0(1) = \infty.
    \end{align*}
    After processing the $i^{\mathrm{th}}$ child column (data qubit), the parity cost is updated as follows:
    \begin{align}
    D_i(0)
    &=
    \min\Biggl\{
    \begin{aligned}
    &D_{i-1}(0)
     + M_{i\rightarrow c}(x_i=0),\\
    &D_{i-1}(1)
     + M_{i\rightarrow c}(x_i=1)
    \end{aligned}
    \Biggr\},
    \label{eq:dp-costs-1}
    \\[2pt]
    D_i(1)
    &=
    \min\Biggl\{
    \begin{aligned}
    &D_{i-1}(0)
     + M_{i\rightarrow c}(x_i=1),\\
    &D_{i-1}(1)
     + M_{i\rightarrow c}(x_i=0)
    \end{aligned}
    \Biggr\}.
    \label{eq:dp-costs-2}
    \end{align}
    \item The target parity of all the neighbor columns (data qubits) of a particular detector (check) node $c$ is its syndrome value $s_c$.
    Therefore, the constraint on the parity of the child columns (data qubits) is $s_c \oplus x_q$, where $x_q$ is the bit-value of the parent column (data qubit) node of detector (check) $c$.
    \item The message from child $c$ to parent $q$ is the minimum cost of child column (data qubit) assignments that satisfy the required parity, i.e. 
    \begin{align*}
        M_{c \rightarrow q}(x_q) = D_{|\mathtt{ch}(c)|}[s_c \oplus x_q],
    \end{align*}
    which is already the minimum cost for the required parity assignment of the child columns (data qubits) ensured by the dynamic least cost assignment of Eq. \eqref{eq:dp-costs-1} and Eq. \eqref{eq:dp-costs-2}.
    \item While we dynamically minimize the cost of child column (data qubit) node assignments based on the syndrome bit constraint, the column (data qubit) node assignments should also be maintained on the fly, which gives the minimum cost for the dynamic parity at each step.
\end{enumerate}
This parity dynamic program process each child column (data qubit) node once and performs a constant number of $\Theta(|N(c)|)$ operations per detector (check) node.
Now, together the whole upward message pass is dominated by the number of messages passed, which is a $\Theta(|E_F|)$ process, where $E_F$ is the set of edges in the Tanner forest F.\\\\
The above dynamic parity program can further be optimized.
As we are interested in the least cost assignment of the columns (data qubits) from the Tanner forest, the two-state dynamic parity program can be significantly optimized into a cheapest column (data qubit) assignment using the cheapest parity flip rule-set.
The exact process is described in Algorithm \ref{alg:parity-dp}, we name it the $\mathtt{CheckUpdate}$ function.
It represents the exact optimization achieved by the dynamic parity program discussed above.
A more detailed argument and discussion for the algorithm $\mathtt{CheckUpdate}$ can be found at Appendix \ref{ap:parity-dp}.
\SetAlgoNoLine
\DontPrintSemicolon
\setlength{\algomargin}{0.32em}
\SetInd{0.22em}{0.36em}
\SetKwInOut{KwIn}{Input}
\SetKwInOut{KwOut}{Output}
\SetKwFunction{CheckUpdate}{CheckUpdate}
\SetKwProg{Fn}{Function}{:}{}
\begin{algorithm}[t]
\caption{Dynamic Parity Program at\\ detector (check) node $c$}
\label{alg:parity-dp}

\Fn{\CheckUpdate{}}{

\KwIn{Detector (check) node $c$, syndrome bit $s_c$, and message from child column (data) to parent detector (check) node: $\{M_{q \rightarrow c}(x_q = 0),M_{q \rightarrow c}(x_q = 1)\}_{\forall q\in\mathtt{ch}(c)}$}

\KwOut{Message sent from detector (check) to parent column (data) node: $M_{c \rightarrow p}(x_p)$ and stored child data assignments
$\mathcal{D}_{c \rightarrow p}$}

\For{$x_p\in\{0,1\}$}{
    $w\gets s_c\oplus x_p$\;

    \ForEach{$q\in\mathtt{ch}(c)$}{
        $x^{\star}_q\gets
        \arg \min\limits_{x_q\in\{0,1\}}M_{q \rightarrow c}(x_q)$\;

        $\Delta_q\gets
        M_{q \rightarrow c}(x_q = 1-x^{\star}_q)-M_{q \rightarrow c}(x_q = x^{\star}_q)$\;
    }

    \If{$\displaystyle
        \bigoplus_{q\in\mathtt{ch}(c)}x^{\star}_q\neq w$}{

        $q^\star\gets
        \displaystyle\arg\min_{q\in\mathtt{ch}(c)}
        \Delta_q$\;


        $x^{\star}_{q^\star}\gets x^{\star}_{q^\star}\oplus1$\;
    }

    $M_{c \rightarrow p}(x_p)\gets
    \displaystyle\sum_{q\in\mathtt{ch}(c)}
    M_{q \rightarrow c}(x_q = x^{\star}_q)$\;

    $\mathcal{D}_{c \rightarrow p}(x_p)\gets
    \{x^{\star}_q : q \in \mathtt{ch}(c)\}$\;
}

\Return{$(M_{c \rightarrow p}(x_p), \mathcal{D}_{c \rightarrow p})$}\;
}

\end{algorithm}

\subsubsection{The Downward Traceback procedure}
The upward message pass is followed by a downward traceback to complete the exact inference algorithm.
It traces all the optimum column (data qubit) assignments of every connected tree component of the forest.
Our dynamic program supports efficient reconstruction of the optimal column (data qubit) node assignments for the cheapest solution to the inference problem.
During the upward message passing stage, we store the constrained minimum cost child column (data qubit) assignments for each detector (check) node $c$.
For any value that the parent column (data qubit) node $p$ assumes (i.e. $x_p \in \{0, 1\}$), we store
\begin{align}
\mathcal{D}_{c \rightarrow p}(x_p) = \{x^{\star}_{q} : q \in \mathtt{ch}(c)\}
\end{align}
from Algorithm \ref{alg:parity-dp}.
For each connected tree component of the Tanner forest, the downward traceback starts from a decision stage at the root column (data qubit) node of the component.
The root node $r$ associates a total cost of
\begin{align}
    M_r(x_r) = \mathtt{C}_rx_r + \sum_{c \in \mathtt{ch}(r)}M_{c \rightarrow r}(x_r), \forall x_r \in \{0, 1\},
\end{align}
where $\mathtt{C}_r$ is the local cost of assigning $x_r = 1$, and it is obtained from the synthetic soft information discussed previously in Section \ref{sec:noisy-dataqubit-weight-assignment} and \ref{sec:bb-qubit-weight}.
The root column (data qubit) node is assigned with the value having the minimum accumulated cost, i.e. 
\begin{align}
    x^{\star}_r = \arg \min_{x_r \in \{0, 1\}}M_r(x_r).
\end{align}
Once the optimal root decision has been made, i.e. $x_r \gets x^{\star}_r$, the traceback starts retrieving the stored child column (data qubit) assignments $x^{\star}_q$ for each detector (check) node given the assigned value of the parent column (data qubit) node $x^{\star}_p$, i.e.
\begin{align} 
\left\{ x^{\star}_q: q\in \mathtt{ch}(c) \right\} \gets \mathcal{D}_{c\rightarrow p}(x^{\star}_p). 
\end{align}
The complete exact inference algorithm is described in Algorithm \ref{alg:exact-map}.
The traceback is also a $\Theta(|E_F|)$ process.
The whole exact inference algorithm incurs a cost of $\Theta(|E_F|)$, which is a linear one.
\SetAlgoNoLine
\DontPrintSemicolon
\setlength{\algomargin}{0.35em}
\SetInd{0.25em}{0.40em}
\SetKwInOut{KwIn}{Input}
\SetKwInOut{KwOut}{Output}
\SetKwFunction{CheckUpdate}{CheckUpdate}
\SetKwFunction{DFS}{DFS}
\SetKwFunction{ETFI}{ETFI}
\SetKwProg{Fn}{Function}{:}{}
\begin{algorithm}
\caption{Exact Tanner Forest Inference}
\label{alg:exact-map}

\Fn{\ETFI{}}{

\KwIn{Tanner forest $F=(V_F \cup C_F, E_F)$, syndrome
$\mathbf{s}$, and synthetic column (data qubit) LLRs $\boldsymbol{\lambda}^{\mathrm{noise}}$}

\KwOut{Inference solution with least cost $\hat{\mathbf{e}}$}
\BlankLine
\ForEach{nontrivial component $T$ of $F$}{
    \DFS(T)\;

    \ForEach{$u$ in reverse postorder}{
        \uIf{$u$ is a non-root column (data qubit) node}{
            \For{$x_u\in\{0,1\}$}{
                $M_{u \rightarrow \mathtt{par}(u)}(x_u)\gets
                -\lambda^{\mathrm{noise}}_u x_u+
                \displaystyle\sum_{c\in\mathtt{ch}(u)}M_{c \rightarrow u}(x_u)$\;
            }
        }
        \ElseIf{$u$ is a detector (check) node}{
            $\begin{aligned}
            &(M_{u \rightarrow \mathtt{par}(u)},\mathcal{D}_{u \rightarrow \mathtt{par}(u)})
            \\[-0.2em]
            &\qquad\gets
            \CheckUpdate\Bigl(
                u,s_u,
                \\[-0.2em]
            &\qquad\qquad
                \{M_{q \rightarrow u}(x_q=0),
                  M_{q \rightarrow u}(x_q=1)\}
                _{q\in\mathtt{ch}(u)}
            \Bigr)
            \end{aligned}$\;
        }
        \uIf{$u$ is a root column (data) node}{
        \For{$x_u \in \{0,1\}$}{
            $M_u(x_u) = -\lambda^{\mathrm{noise}}_ux_u + \sum\limits_{c \in \mathtt{ch}(u)}M_{c \rightarrow u}(x_u)$
            }
            $x^{\star}_u\gets\displaystyle\arg\min_{x_u\in\{0,1\}}M_u(x_u)$\;
        }
    }

    Traverse $T$ downward from root column node $r$, with $x_r = x^{\star}_r$\;

    At any non-root column node $q$, if
    $x^{\star}_q \gets b$, pass $b$ to its child detector node.\;

    If detector (check) $c$ gets value $b$ from its parent $q$, assign its child column (data) nodes according to
    $\mathcal{D}_{c \rightarrow q}(x_q = b)$.\;
}

\ForEach{isolated column (data) node $q$ of $F$}{
    $x^{\star}_q\gets\mathbb{I}[\lambda^{\mathrm{noise}}_q<0]$\;
}

\Return{$\hat{\mathbf{e}} \gets 
\mathbf{x}^{\star}$}\;
}
\end{algorithm}

\subsubsection{Ensemble decoding}
Ensembling is very efficient for strategizing accurate decoding outcomes when individual decoding attempts might have shortcomings \cite{shutty2026efficient}.
The key intuition of this technique is a parallel implementation of multiple decoding instances and net pool a decoding decision from this ensemble.
For instance, inference carried out on a Tanner forest is exact.
However, it outputs a MAP solution, which in general might not be optimal for the DQMLD.
Further, we construct the Tanner forests from the synthetic soft information, derived from the channel LLRs and syndrome information with added noise.
These synthetic soft informations do not guarantee a syndrome spanning forest for the general-purpose QLDPC codes (Appendix \ref{ap:syndrome-forest}).
This implies that, despite removing cycles and mitigating degenerate trapping set-like configurations, it is highly non-trivial to find a set of edges that can be removed with confidence during the forest construction, such that a valid logical correction supporting the syndrome can be found.
Therefore, choosing an optimum removal set for the data qubits during the forest construction is impossible beforehand, either with the noise-assisted data qubit scoring or even with the BP soft LLR guided scoring.  
Therefore, creating an ensemble of Tanner forests helps mitigate this issue.
From the Tanner graph, we construct an ensemble of forests and execute Algorithm \ref{alg:exact-map} on each of those ensembles.\\\\
In our work, we chose different decoding instances with varying noise strength as an ensemble choice.
In any instance, the data qubit weights are incorporated with an i.i.d Gaussian noise.
We either scale the strength of these noises with a fixed scaling parameter $\tau^b$ for an ensemble decoding instance $b$ or perturb the base scores with ensemble-seeded noise itself. 
The final decoding outcome can be obtained from the results of different parallel decoders using a pooling strategy.
We explore two such strategies.
One of them is the most-likely-error (MLE) pooling strategy reported in Ref. \cite{shutty2026efficient}.
Consider an ensemble of $B$ decoders.
The MLE pooling returns the most likely outcome from the ensemble.
Consider $\mathtt{C}_q$ as the cost associated with the qubit $q$ suffering an error.
The total cost of an predicted error $\mathbf{e}^b$ from one of the ensemble decoding instance $b$ is:
\begin{align}
    \mathbf{C}^b(\hat{\mathbf{e}}^{b}) = \sum_{q=1}^{n}\mathtt{C}_q e^{b}_q.
\end{align}
The pooled outcome $\hat{\mathbf{e}}$ from the ensemble is obtained follows:
\begin{align}
    \hat{\mathbf{e}} := \arg \min_{ \hat{\mathbf{e}}^{b}}\mathbf{C}^b(\hat{\mathbf{e}}^{b}).
\end{align}
The other method we use is the first come first serve method.
This method essentially outputs the prediction from the ensemble instance that produces a syndrome supporting solution first.
This method, unlike the MLE pooling does not require all the ensemble instances to finish their decoding. 
We now perform circuit-level simulations to test the error correcting capabilities the NAED decoder and also benchmark its decoding speed compared to BP+OSD$0$.
\begin{figure*}
    \centering
    \includegraphics[width=0.65\textwidth]{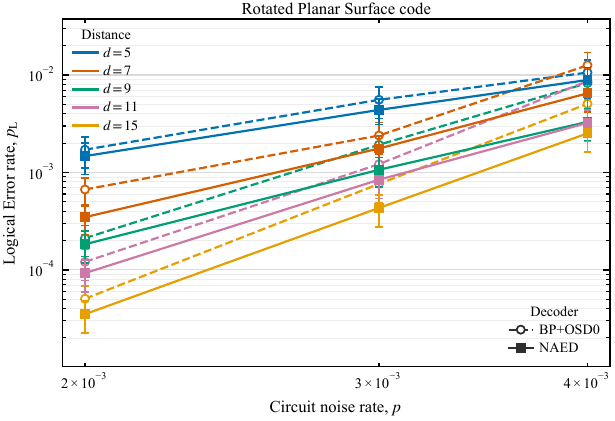}
    \caption{We show the circuit-level logical error rates of the rotated planar surface codes obtained under both the NAED and BP+OSD$0$ decoder.
    The NAED contains an ensemble of $11$ decoding instances with the noise scaling factor uniformly distributed across the $11$ forest constructions in the range $\{0, 0.5\}$.
    Also, the synthetic soft information used for the Tanner forest constructions is used as in Eq. \eqref{eq:noisy-var-weight}.
    The BP iterations are capped at a maximum of $1000$ for the BP+OSD$0$ decoder.
    The circuit-level noise model assumed for the simulations are described in the Appendix \ref{ap:circuit-simulation}.
    A logical error occurs if the decoder-predicted logical observable does not match the logical of the prepared state.
    The LERs are obtained at a $95\%$ confidence interval.
    }
    \label{fig:surface-ler-comparison}
\end{figure*}

\SetAlgoNoLine
\DontPrintSemicolon
\setlength{\algomargin}{0.6em}
\SetInd{0.35em}{0.55em}
\SetKwFunction{Forest}{Forest}
\SetKwFunction{ETFI}{ETFI}
\SetKwProg{Fn}{Function}{:}{}
\begin{algorithm}[H]
\caption{A general NAED framework}
\label{alg:naed}

\KwIn{Detector Error Model matrix: $H$, syndrome: $\mathbf{s}$, channel LLRs: $\boldsymbol{\mathtt{llr}}$,
ensemble size $B$, ensemble noise scales: $\{\tau^b\}_{\forall b \in B}$}
\KwOut{Decoder prediction $\hat{\mathbf{e}}$}

Compute a syndrome and channel LLR dependent base score (Eq. \eqref{eq:base-var-score}): $\boldsymbol{\omega}=\Phi(H,\mathbf{s},\boldsymbol{\mathtt{llr}})$\;

\ForPar{$b=1,\ldots,B$}{
    Build a static or syndrome-aware dynamic Tanner forest from full or sparse DEM $H$.\;
    $F^b, H^{b}_{F}$ = \Forest($H, \boldsymbol{n}^{b} = \eta(\boldsymbol{\omega}, \tau^b)$)\;

    Perform exact forest inference (Algorithm \ref{alg:exact-map}):\;
    $\hat{\mathbf{e}}^{b} \gets$ \ETFI($F^b$, $\mathbf{s}$, $\mathbf{n}^b$)\;
    
    \If{$\mathbf{s}\notin\operatorname{span}(H^{b}_{F})$}{
        Discard $F^b$ and continue\;
    }

    \If{DEM matrix $H$ is sparse}{
        Lift $\hat{\mathbf{e}}^{b}$ to the full DEM data qubit set\;
    }

    \If{$H\hat{\mathbf{e}}^{b}=\mathbf{s}$}{
        store $\hat{\mathbf{e}}^{b}$ with cost
        $\mathtt{z}^b=\boldsymbol{\mathbf{C}^{b}}(\hat{\mathbf{e}}^{b})$\;
    }
    \If{$\mathtt{early\_stop} = \mathrm{True}$}{
        \Return{\textnormal{first valid} $\hat{\mathbf{e}}^{b}$}\;
    }
}

\Return{$\arg\min\limits_{\hat{\mathbf{e}}^{b}} \mathtt{z}^b$}\;

\end{algorithm}

\section{Simulation Results}
\label{sec:results}
We now perform circuit-level simulations of the rotated planar surface code and the BB codes with our NAED decoder.
The circuit-level simulations assume a more general noise model, where the stabilizer measurement circuits are themselves faulty.
Therefore, for the decoding purposes, the stabilizer measurements themselves can not be decisive and to mitigate faulty stabilizer measurements a repeated number of error correction rounds needs to be performed.
Under these circumstances, the syndromes obtained from the noisy stabilizer circuits are recovered by the decoder using the detector error model (DEM) \cite{gidney2021stim, derks2025designing}.
DEMs can capture a vast set of error mechanisms through sets of deterministic linear combinations of circuit measurements, called the detectors.
Accordingly, under circuit-level noise, decoding is performed on the DEM rather than on the conventional parity check matrix.
The rows of the DEMs correspond to detectors, while the columns represent circuit-level error mechanisms rather than individual data qubit errors.
The decoder, therefore, takes the observed detector events as its input syndrome, rather than the outcomes of individual stabilizer measurements.\\\\
However, these DEMs are generally not sparse, whereas the parity check matrix of the underlying QLDPC codes is sparse.
NAED, like many other efficient QLDPC decoders, works best on a sparse decoding matrix.
The lack of DEM sparsity raises a unique problem in terms of achieving a syndrome spanning forest from the underlying DEM decoding matrix.
In general, a syndrome spanning forest cannot be guaranteed from the DEM matrix (see more in Appendix \ref{ap:syndrome-forest}).
In Ref. \cite{demarti2026almost} the authors propose a sparsification routine, which constructs a sparse DEM matrix with bounded column weights.
We use the same sparsification procedures from Ref. \cite{demarti2026almost} for the surface and BB codes.
Essentially, the sparsification tries to capture the sparse nature of the underlying code's parity check matrix into the circuit DEM.
We show in Appendix \ref{ap:syndrome-forest} that sparse DEMs still can not guarantee a syndrome spanning forest unless the sparse DEM has only graph-like error mechanisms.
This is also argued in the Ref. \cite{demarti2026almost}.
For surface codes, the sparse DEM matrix contains graph-like columns and therefore a syndrome spanning forest can be guaranteed regardless of the inclusion order of the decoding matrix columns into the forest.\\\\
We simulate various distance $d$ rotated planar surface codes using the NAED framework and further compare the decoding performances with the BP+OSD$0$ decoder.
We create an ensemble of $B = 11$ decoding instances and set a uniform noise scaling in the range of $\tau \in \{0, 0.5\}$ for each of the $11$ ensemble decoders.
In Fig. \ref{fig:surface-ler-comparison} we show the logical error rates (LER) obtained from both the NAED and BP+OSD$0$ decoder.
The LERs are obtained with a $95\%$ confidence interval, indicated using the error bars.
A detailed setup for the circuit level simulations is discussed in Appendix \ref{ap:circuit-simulation}.
We observe that for surface code, the Tanner forest construction using noise perturbed synthetic soft informations (Eq. \eqref{eq:noisy-var-weight}) is sufficient to achieve OSD-like LERs.
In all the simulated distances, NAED can outperform BP+OSD$0$ with just $11$ parallel decoding instances in the ensemble.\\
We also estimate a comparison of per error-correcting round decoding times for each surface code distance.
We compare the decoding time of BP+OSD$0$ with NAED in Fig. \ref{fig:benchmark-surface}.
For the NAED, we use $10$ parallel decoding instances, each executed on parallel CPU workers of an M$4$ MacBook chip.
The decoding time of the NAED ensemble is noted as soon as any of the workers report a syndrome valid output, i.e., the decoding time for the ensemble decoders is determined by the time required to get the first syndrome valid solution.
For BP+OSD$0$, we set a fixed cap of $1000$ BP iterations.
From Fig. \ref{fig:benchmark-surface} we observe that the NAED decoding time outperforms the BP+OSD$0$ in all the simulated surface code distances, with NAED outperforming BP+OSD$0$ by a couple of orders as more resource-intensive decoding is required for the higher distance codes.
\begin{figure}
    \centering
    \includegraphics[width=0.5\textwidth]{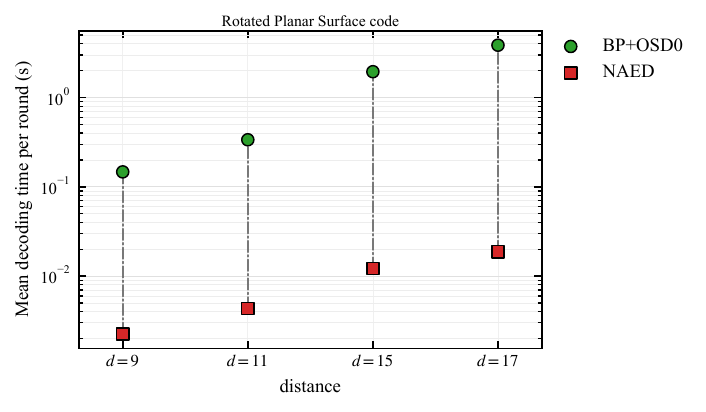}
    \caption{We compare the mean per error correction round decoding time of the NAED and BP+OSD$0$ for various distance surface codes.
    The NAED decoder follows Algorithm \ref{alg:naed}.
    The NAED ensemble contains $10$ decoding instances working in parallel on an M$4$ MacBook chip.
    We report the NAED decoding time from the initialization to the time of the first obtained syndrome valid solution from any of the parallel workers.
    We sample $1000$ syndromes at a fixed circuit noise rate $p = 0.002$ and obtain the mean decoding times per error correction round.
    }
    \label{fig:benchmark-surface}
\end{figure}
%
\begin{table}[t]
    \centering
    \caption{Success rate comparisons of a multi stage decoding setup. Code: Surface code, min-sum BP normalization factor $\alpha = 1.0$, circuit noise rate $p=0.002$.}
    \label{tab:surf-solution-fraction-1}
    \renewcommand{\arraystretch}{1.2}
    \setlength{\tabcolsep}{2pt}
    \begin{tabular}{| l |c |}
        \toprule
        \textbf{Decoding stage}
        &
        \textbf{Success percentage}
        \\
        \midrule
        \hline
        BP                 & $43.7\%$  \\
        Forest-based  & $56.3\%$ \\
        \bottomrule
    \end{tabular}
\end{table}
\begin{table}[t]
    \centering
    \caption{Success rate comparisons of a multi stage decoding setup. Code: BB code, min-sum BP normalization factor $\alpha = 1.0$, circuit noise rate $p=0.002$.}
    \label{tab:bb-solution-fraction-1}
    \renewcommand{\arraystretch}{1.2}
    \setlength{\tabcolsep}{2pt}
    \begin{tabular}{| l |c |}
        \toprule
        \textbf{Decoding stage}
        &
        \textbf{Success Percentage}
        \\
        \midrule
        \hline
        BP                & $99.7\%$ \\
        Forest-based       & $0.3\%$  \\
        \bottomrule
    \end{tabular}
\end{table}
%
\begin{table}[ht!]
    \centering
    \caption{Success rate comparisons of a multi stage decoding setup. Code: Surface code, min-sum BP normalization factor $\alpha = 0.625$, circuit noise rate $p=0.002$.}
    \label{tab:surf-solution-fraction-625}
    \renewcommand{\arraystretch}{1.2}
    \setlength{\tabcolsep}{2pt}
    \begin{tabular}{| l |c |c |}
        \toprule
        \textbf{Decoding stage}
        &
        \textbf{Success Percentage}
        \\
        \midrule
        \hline
        BP                  & $0.7\%$  \\
        Forest-based  & $99.3\%$ \\
        \bottomrule
    \end{tabular}
\end{table}
\begin{table}[ht!]
    \centering
    \caption{Success rate comparisons of a multi stage decoding setup. Code: BB code, min-sum BP normalization factor $\alpha = 0.625$, circuit noise rate $p=0.002$.}
    \label{tab:bb-solution-fraction-625}
    \renewcommand{\arraystretch}{1.2}
    \setlength{\tabcolsep}{2pt}
    \begin{tabular}{| l |c |}
        \toprule
        \textbf{Decoding stage}
        &
        \textbf{Success Percentage}
        \\
        \midrule
        \hline
        BP                & $73.5\%$ \\
        Forest-based      & $24.2\%$  \\
        \bottomrule
    \end{tabular}
\end{table}

\noindent Before describing the experiments with the BB codes, we study another important aspect of decoding over a Tanner forest.
We first study the syndrome resolving capabilities of BP on both the full and sparsified DEMs for the distance $15$ rotated planar surface code and the $[[144, 12, 12]]$ BB code.
Further, we report the number of unresolved syndromes that is properly addressed by decoding over a forest and not by the previous BP decoding stage.
For a standard circuit-level noise model (Appendix \ref{ap:circuit-simulation}) and constant circuit noise rate parameter $p = 0.002$, we generate $10$K syndrome samples.
We create an ensemble of BP decoders with a varied number of total message passing iterations, in steps of $17$ from the range $\{17, 391\}$.
The ensemble consists of $23$ decoding instances.
If any of the decoding instance from the ensemble predicts a syndrome supporting solution, success is reported.
We first apply BP on the full DEM.
The syndromes that are not corrected by the former round are fed into a round of BP on the sparse DEM, with the original channel priors mapped into the sparse channel priors.
Syndromes that are not resolved by these two BP stages are then fed into a forest inference stage using Algorithm \ref{alg:exact-map}.
In this final round, we construct an ensemble of $23$ Tanner forests for each sampled syndrome that are not solved using the two prior BP stages.
We construct these Tanner forests from the full DEM and later from the sparse DEMs, using the noise-perturbed methods discussed previously.
A solution is reported if Algorithm \ref{alg:exact-map} succeeds in finding a syndrome-supported solution from any of these inference stages on the ensemble of Tanner forests.\\\\
We summarize a report for the success ratios of the two primary decoding stages (i.e., BP-based and forest-based); in Table \ref{tab:surf-solution-fraction-1}, \ref{tab:bb-solution-fraction-1}, \ref{tab:surf-solution-fraction-625}, \ref{tab:bb-solution-fraction-625}.
We studied the impact on BP stage success due to the two normalization factors $\alpha = 1.0$ and $\alpha = 0.625$ for the min-sum algorithm of BP.
The results indicate that surface code can be critical to BP decoding.
We observe that BP is especially detrimental over the full DEM for surface codes.
However, its graph-like error mechanisms make it enticing for decoding over the sparse DEM, which dominantly contributes to the BP stage success rate at $\alpha = 1.0$.
Further, a significant amount of syndrome samples still cannot be resolved by either stages of BP and the exact forest inference of Algorithm \ref{alg:exact-map} approximately solves all the unresolved syndromes from the BP stages.
The $99\%$ success rate of Algorithm \ref{alg:exact-map} from Table \ref{tab:surf-solution-fraction-625} shows the tremendous advantage of a forest-based inference for surface codes.\\
On the other hand, for the BB code, we observe that the two BP stages are very much capable of finding a solution to the syndrome, with the success ranging from $70-99\%$.
Whereas, the forest-based inference has been called only for a small number of instances, because a significant number of sampled syndromes are already addressed by the two BP decoding stages for BB codes.
This particularly implies that in general BB code might incur structures that are much more favorable to BP, even so, with the right normalization factor, and there also exist cases where the errors cannot be supported by the forests.
For instance, a logically valid solution resembling a Tanner cycle can never be spanned by a Tanner forest, and therefore, the exact inference algorithm will never be able to predict that valid correction from the forest itself.
Therefore, for surface code, the complete failure of BP from Table \ref{tab:surf-solution-fraction-625}, implies forest based inference can show significant decoding improvements under the sparse DEMs.
However, for BB codes, there are certain error configurations that cannot be solved by forest-based decoding.
This particular observation lead us to try a low-cost post-processing with our NAED framework to target specifically these unsolavble errors.\\\\
We employ the following decoding pipeline for the BB codes, for any given syndrome:
\begin{itemize}
    \item First we employ the NAED over the full DEM of the BB code.
    \item If the first stage fails to resolve the syndrome, we employ NAED over the sparse DEM.
    \item If both the NAED stages fail, we run BP iterations over the full DEM and later on the sparse DEM.
\end{itemize}
\begin{figure*}[t]
    \centering

    \begin{minipage}[t]{0.48\textwidth}
        \centering
        \includegraphics[width=\linewidth]{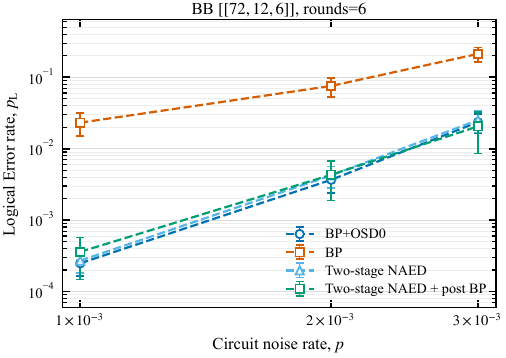}
        \par\smallskip
        \textbf{(a)}
    \end{minipage}
    \hfill
    \begin{minipage}[t]{0.48\textwidth}
        \centering
        \includegraphics[width=\linewidth]{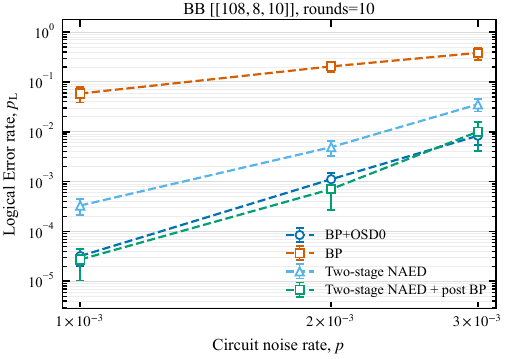}
        \par\smallskip
        \textbf{(b)}
    \end{minipage}

    \vspace{0.8em}

    \begin{minipage}[t]{0.50\textwidth}
        \centering
        \includegraphics[width=\linewidth]{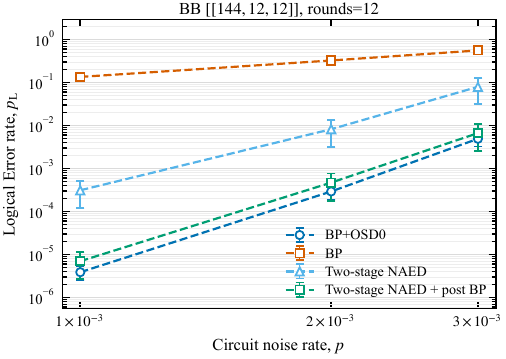}
        \par\smallskip
        \textbf{(c)}
    \end{minipage}

    \caption{We show the circuit-level LERs of BB codes obtained under various decoding methods.
    We compare the standard min-sum BP variant and BP+OSD$0$ with the NAED variants proposed for general-purpose QLDPC codes.
    The `two-stage NAED' first tries to solve the syndrome using Algorithm \ref{alg:naed} with the full DEM as input.
    If the syndrome is not resolved using the full DEM, Algorithm \ref{alg:naed} is executed on the sparse BB DEM.
    The `Two-stage NAED + post BP' is motivated by the findings from Table \ref{tab:bb-decoding-stage-breakdown}.
    This NAED variant invokes a low-cost post-min-sum BP algorithm over the full and sparse DEMs if the two NAED stages cannot resolve the input syndrome.
    The ensemble size of the NAED is $B = 100$.
    The synthetic soft information for the BB code DEMs follows from Eq. \eqref{eq:var-weight-bbcode}.
    The scalar parameters used for the DEM column weights are as follows: $\kappa = 0.5, \tau = 0.75, \beta = 2.0$.
    The ensemble forest diversity is introduced by an ensemble instance seeded noise $\varepsilon^{b}_q$.
    }
    \label{fig:bb-ler-comparison}
\end{figure*}
\begin{table*}[t!]
    \centering
    \caption{Success rate comparisons of a $4$ stage NAED + post-BP decoding setup.
    Each entry reports the number of solved shots and the corresponding success percentage; from a total of $10$K sampled syndromes. circuit noise rate $p = 0.002$.}
    \label{tab:bb-decoding-stage-breakdown}

    \renewcommand{\arraystretch}{1.2}
    \setlength{\tabcolsep}{2pt}

    \begin{tabular}{| l | c | c | c | c | c}
        \toprule
        \textbf{Code}
        &
        \makecell{\textbf{Forest (full DEM)}\\
                  \textbf{Success percentage}}
        &
        \makecell{\textbf{Forest (Sparse DEM)}\\
                  \textbf{Success percentage}}
        &
        \makecell{\textbf{Post-BP (full DEM)}\\
                  \textbf{Success percentage}}
        &
        \makecell{\textbf{Post-BP (Sparse DEM)}\\
                  \textbf{Success percentage}}
        \\
        \midrule
        \hline
        BB $[[108,8,10]]$
        & $9{,}911 \,/\, 10{,}000: 99.11\%$
        & $43 \,/\, 10{,}000: 0.43\%$
        & $11 \,/\, 10{,}000: 0.11\%$
        & $5 \,/\, 10{,}000: 0.05\%$
        \\

        BB $[[144,12,12]]$
        & $9{,}845 \,/\, 10{,}000: 98.45\%$
        & $84 \,/\, 10{,}000: 0.84\%$
        & $17 \,/\, 10{,}000: 0.17\%$
        & $14 \,/\, 10{,}000: 0.14\%$
        \\

        \bottomrule
    \end{tabular}
\end{table*}
In the above 4-stage setup, the channel LLRs and the synthetic soft information are mapped to their sparse DEM equivalent, while the decoding stage happens over the sparse DEM.
In table \ref{tab:bb-decoding-stage-breakdown}, we report the solution finding capabilities of these different stages for the $[[108,8,10]]$ and $[[144,12,12]]$ BB code.
We again sample $10$K syndromes for a fixed circuit noise rate $p = 0.002$ under the same circuit level noise.
We observe that Algorithm \ref{alg:exact-map} can solve more than $98\%$ syndromes from the full DEM.
Whereas the BP stages contribute only $0.1 - 0.2 \%$ syndrome resolutions.
However, this small resolution is necessary to reach OSD accuracy for larger BB codes.
For the circuit-level memory experiments of the BB codes, we use around $B = 100$ parallel decoding instances in the NAED ensemble.
For the post-BP rounds, the ensemble sets the number of message passing iterations uniformly distributed in the range $\{17, 200\}$ over the $100$ parallel decoding instances.
We show the LER comparisons for different BB codes in Fig. \ref{fig:bb-ler-comparison}.
We observe that for higher BB codes, NAED alone is unable to achieve OSD accuracy due to errors that cannot be supported by the forest, and therefore Algorithm \ref{alg:exact-map} cannot predict it.
Under those circumstances, employing a low-cost post-BP ensemble helps close the decoding gap.
We also compare the per error correcting round decoding time benchmarks for different BB codes in Fig. \ref{fig:benchmark-bb}.
We observe that the forest-based ensemble NAED can achieve a couple of orders of improvement in the decoding time compared to the BP+OSD$0$.

\begin{figure}
    \centering
    \includegraphics[width=0.5\textwidth]{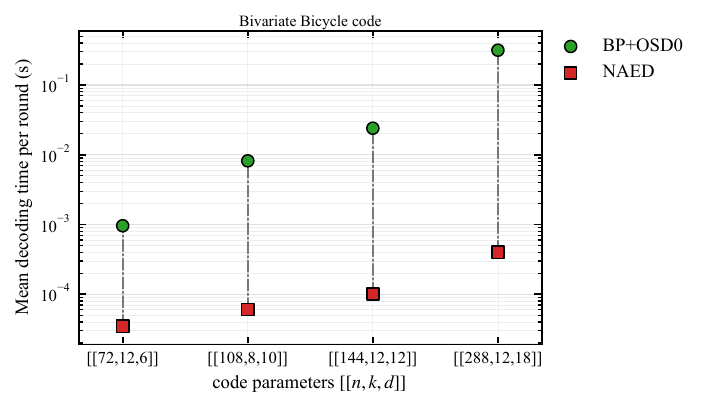}
    \caption{We compare the mean per error correction round decoding time of the NAED and BP+OSD$0$ for various BB codes.
    In this benchmark, we compare the decoding time of the two-stage NAED, i.e., Algorithm \ref{alg:naed} executed with the full and sparse DEM.
    The two-stage NAED refers to an ensemble of $10$ decoding instances, with each instance running NAED on the full DEM, and if the input syndrome is not resolved by this stage, the syndrome is fed on to a NAED on the sparse DEM.
    At a circuit noise rate of $p = 0.002$, we sample $1000$ syndromes and compare the decoding times.
    The two stage NAED decoding time notes the time till any of the ensemble instance output a syndrome valid prediction.}
    \label{fig:benchmark-bb}
\end{figure}

\section{Conclusion and Future Works}
We propose a Noise-Assisted Ensemble Decoding that is primarily based on the fact that inference on a forest is always exact.
Given a syndrome, inference on a syndrome-spanning forest can always produce an outcome that resolves the syndrome.
We propose a dynamic program algorithm which uses a single round of message passing and assignment traceback to infer the solution from the Tanner forest.
We propose the use of noise-perturbed synthetic soft information in both the forest construction and the dynamic program for inference.
Our analysis shows that the synthetic soft information contains sufficient information, and that the noise perturbation provides enough ensemble diversity to surpass the decoding performance achieved by BP+OSD$0$ for the surface codes.
The successful outcomes of using synthetic soft information instead of the BP soft information reflect over a key aspect of the standard message passing rules of BP.
It implies that although BP-derived soft LLRs are widely used to capture information about complex error propagation in noisy stabilizer circuits, their reliability can deteriorate for particularly harmful error configurations, such as quantum trapping sets. In these regimes, standard BP soft information may fail to distinguish the relevant error support, and the message-passing dynamics can oscillate or remain non-convergent. Consequently, imposing a fixed cap on the number of BP iterations may either terminate before a valid solution is reached or incur excessive decoding latency, highlighting an inherent limitation of relying on heuristic iterative message passing for real-time decoding.
Whereas, our dynamic program gives a single-shot inference.\\\\
Our motivation for designing the synthetic soft information is to preferentially incorporate vulnerable columns of the decoding matrix into the Tanner forest construction. 
By combining syndrome-aware decoding matrix column weighting with ensemble noise perturbations, a set of informative soft priorities are generated.
It helps identify the neighborhoods of the decoding matrix columns that are most likely to support the observed syndrome. 
This proposed weighting mechanism, conceptually and in terms of performance, resembles the soft information produced by unsuccessful BP iterations, on the premise that BP fails to recover the exact error configuration in the presence of detrimental error supports.
We show that these proposed column weights can guide a good Tanner forest construction, thereby enabling accurate inference outcomes.
For general-purpose QLDPC codes, such as the BB codes, we argue that the forest inference can be of tremendous advantage once the Tanner forest spans the syndrome.
However, in general, a syndrome spanning Tanner forest cannot be guaranteed, and we propose a multistage NAED decoding with a low-cost post-BP stage in such scenarios.
We also show that even in the multi-stage NAED with low-cost post-processing, the forest-based inference can correct a significant portion of the errors.
The limitation of the forest-based inference arises due to the fact that some errors can never be supported in a forest.
Therefore, in general, there cannot be an optimum set of soft information that can guarantee a syndrome-spanning Tanner forest.\\\\
In future works, we will explore the hardware implementation of NAED on FPGAs.
Our benchmarks have already shown NAED as a well-suited candidate for real-time decoding for resource-intensive circuit-level fault resolution.
Therefore, practical insights into the implementation and algorithmic evolution of the dynamic program for the inference algorithm and syndrome-aware forest constructions using synthetic soft information are of good interest.
Further, the ensemble-decoding nature of NAED makes it highly demanding for implementation on hardware designed specifically to take advantage of high parallel computing power, such as GPUs or FPGAs.
We expect an attractive performance in throughput and latency with the real-time implementations of the NAED decoder.\\\\
We also plan to study whether the synthetic soft information computations can be further optimized for general-purpose QLDPC codes.
As we move towards higher distance codes, the fraction of syndrome-spanning forests that can be obtained using any soft-information-guided construction decreases.
Therefore, we require larger ensemble sizes to decode larger QLDPC codes.
This limitation is fundamental to the structures of the QLDPC codes.
An interesting new direction will be to study whether we can improve the noise-perturbed synthetic soft information to mitigate such cases.
Further, the NAED can be tested with frameworks such as the `decoder switching' \cite{toshio2025decoder}, to serve the higher degree, higher distance QLDPC codes.
Also, further studies can be done with different QLDPC codes to observe which QLDPC codes are more vulnerable to errors that cannot be spanned by the Tanner forests.
This can also be helpful in strategizing optimum post-processing solutions for the NAED framework.

\section{Code Availability}
The NAED decoder module developed in this work will be available publicly in the near future.
The numerical data generated from the time benchmarks and quantum memory experiments are available from the authors on reasonable requests.

\section{Acknowledgments}
M.B. acknowledges support
from the doctoral research fellowship of IISER Bhopal.
A.R. acknowledges funding support from the Department of Science and Technology (DST), Government of India, under the National Quantum Mission (NQM), DST/QTC/NQM/QComm/2024/2.
A.R. is thankful for the grant received from the U.S.—India Science and Technology Endowment Fund (USISTEF), USISTEF/QT/165/2023. 
\appendix

\section{CheckUpdate: cheapest column (data qubit) assignment using cheapest parity flip}
\label{ap:parity-dp}
The two-state dynamic parity program discussed in section \ref{sec:upward-pass} ensures an exact minimization of the cost for column (data qubit) assignments of the sub-tree extending from any detector (check) node.
We are interested in the minimum cost assignment.
This allows a more specialized implementation of the two-state dynamic programming we discussed.
We describe the algorithmic steps in Algorithm \ref{alg:parity-dp} and justify the implementation in this appendix.
The motivation for this more relaxed implementation is the fact that binary parity is the only constraint in the DP program.
Therefore, we can simply compute the optimal bit value $z_q$ for each child column (data qubit) $q \in \mathtt{ch}(c)$ of a detector (check) node $c$, and later, if the binary constraint is not satisfied through this optimum local assignment, we look for the cheapest assignment change.
Each child column (data qubit) $q \in \mathtt{ch}(c)$ can assume two possible assignments, i.e. $x_q \in \{0,1\}$.
The cost of any such local assignment is given by the information carried through the corresponding upward message $M_{q \rightarrow c}(x_q)$.
Therefore, the least costly assignment $z_q$ for any child column (data qubit) $q$ of the detector (check) node $c$ can be determined through:
\begin{align*}
    z_q = \arg \min_{x_q \in \{0,1\}}M_{q \rightarrow c}(x_q).
\end{align*}
The total cost of such locally individual minimum cost assignments of all the child columns (data qubits) is:
\begin{align*}
    C_{\mathrm{min}} = \sum_{q = 1}^{|\mathtt{ch}(c)|}M_{q \rightarrow c}(z_q),
\end{align*}
where $|\mathtt{ch}(c)|$ is the total number of the child column (data) nodes adjacent to detector (check) node $c$.
\begin{lemma}
    Any optimum child column (data qubit) assignment obtained by assigning individually minimum cost assignments of the columns (data qubits) can be adjusted to the required parity through the single column (data qubit) flip having the least cost.
\end{lemma}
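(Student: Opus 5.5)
We prove the lemma by a direct exchange argument on the per-child regrets, comparing any feasible assignment to the locally optimal assignment $\{z_q\}$. Write $\Delta_q = M_{q\rightarrow c}(1-z_q) - M_{q\rightarrow c}(z_q) \ge 0$ for each $q\in\mathtt{ch}(c)$, exactly as in Algorithm \ref{alg:parity-dp}. For any assignment $\mathbf{x} = \{x_q\}_{q\in\mathtt{ch}(c)}$, let $S(\mathbf{x}) = \{q : x_q \neq z_q\}$ be the set of children whose value differs from its local optimum. Since each child has only two values, the cost decomposes as
\begin{align*}
\sum_{q\in\mathtt{ch}(c)} M_{q\rightarrow c}(x_q) = C_{\mathrm{min}} + \sum_{q\in S(\mathbf{x})}\Delta_q .
\end{align*}
This identity reduces the constrained minimization in Eq. \eqref{eq:msg_check_to_var} (equivalently $D_{|\mathtt{ch}(c)|}[w]$ with $w = s_c\oplus x_p$) to choosing a flip set $S$ of minimum total regret subject to a parity condition.

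The parity of $\mathbf{x}$ equals $\bigoplus_q z_q \oplus (|S| \bmod 2)$. There are two cases.

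\textbf{Case 1: $\bigoplus_q z_q = w$.} Feasible sets $S$ have even cardinality. Since every $\Delta_q\ge 0$, the choice $S=\emptyset$ is optimal with cost $C_{\mathrm{min}}$, so no flip is needed.

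\textbf{Case 2: $\bigoplus_q z_q \neq w$.} Feasible $S$ must have odd cardinality, so $S\neq\emptyset$. Any such $S$ contains some $q$, and by nonnegativity of the regrets
\begin{align*}
\sum_{q\in S}\Delta_q \ge \min_{q\in S}\Delta_q \ge \min_{q\in\mathtt{ch}(c)}\Delta_q = \Delta_{q^\star}.
\end{align*}
The singleton $S=\{q^\star\}$ is feasible and attains this bound. Hence flipping the single least-cost child $q^\star$ yields a minimizer, matching the output of $\mathtt{CheckUpdate}$ and the value $D_{|\mathtt{ch}(c)|}[s_c\oplus x_p]$ from Eqs. \eqref{eq:dp-costs-1}--\eqref{eq:dp-costs-2}.

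The main obstacle is making the nonnegativity of $\Delta_q$ airtight, since the whole exchange argument rests on it. It holds by the definition of $z_q$ as an argmin, including ties and infinite message values, which we handle by the usual extended-real conventions. We also need to argue, in Case 2, that $\mathtt{ch}(c)$ is nonempty, as otherwise no feasible assignment exists. This follows because a check node reached in the DFS either has children or is a leaf. A leaf check imposes $s_c = x_p$, so its message takes the value $\infty$ in the infeasible parity and no flip is required there.

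Finally, we remark that the statement is really about \emph{an} optimum rather than the unique one: ties in $\Delta$ or in $z_q$ still produce a minimum-cost assignment. This is all that the downward traceback needs.
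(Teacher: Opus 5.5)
Your proof is correct and follows essentially the same route as the paper's: write the cost of any assignment as $C_{\mathrm{min}}$ plus the nonnegative regrets of the flipped set $S$. Then note that the required parity fixes the parity of $|S|$, so $S=\emptyset$ is optimal when the local parity already matches and the single cheapest flip is optimal otherwise. Your explicit bound $\sum_{q\in S}\Delta_q \ge \min_{q}\Delta_q$ and your remark on leaf checks (empty $\mathtt{ch}(c)$, where the infeasible parity gets cost $\infty$ rather than a flip) make rigorous two points the paper's proof leaves implicit.
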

\begin{proof}
    Consider the individual least cost assignments generate the child column (data qubit) parity $y_{\mathrm{child}} = z_0 \oplus z_1 \oplus \cdots \oplus z_{|\mathtt{ch}(c)|}$.
    The received syndrome bit of the detector (check) node $c$ is $s_c$, which suggests if the parent column (data qubit) $p$ of detector (check) node $c$ assumes value $x_p$, the required parity from the child column (data qubit) nodes is
    \begin{align*}
        y_{\mathrm{req}} = s_c \oplus x_p.
    \end{align*}
    Suppose a set of child columns (data qubits) $S = \{q : x_q \neq z_q\}$ are flipped from their individual least cost assignments (i.e. $z_q \to x_q$).
    These assignment flips change the total assignment costs of the child columns (data qubits), as follows:
    \begin{align*}
        \Delta = C_{\mathrm{min}} + \sum_{q \in S}\delta_q,
    \end{align*}
    where $\delta_q$ is the cost of flipping individual child column (data qubit)s from their local optimum value, and it is defined to be $\delta_q = M_{q \rightarrow c}(1 - z_q) - M_{q \rightarrow c}(z_q)$.
    Further, as $z_q$ denotes the minimum cost assignment of child column (data) node $q$, $\delta_q$ is always positive, i.e., $\delta_q \geq 0$.
    Now there are two cases to consider:
    \begin{itemize}
        \item The individually optimum child column (data) node assignments satisfy the required parity, i.e. $y_{\mathrm{child}} = y_{\mathrm{req}}$:
        In this case, the existing assignment is optimum as any set of even child node flips will increase the cost of the assignment, and any set of odd flips will result in the change of the parity.
        This implies that the locally optimal assignment is also the globally optimal parity-constrained assignment for the underlying detector (check) node.
        \item The individually optimum child assignments do not satisfy the required parity, i.e. $y_{\mathrm{child}} \neq y_{\mathrm{req}}$:
        In these circumstances, the local optimal assignments are infeasible due to the received syndrome parity, and a set of child columns (data qubits) needs to be flipped.
        This further suggests that the set $S$ must have an odd number of candidates.
        However, we have seen $\delta_i \geq 0$.
        Therefore, a single element in the set $S$ is sufficient and is optimally required to incur minimum penalty to the minimum assignment cost $C_{\mathrm{min}}$, i.e.
        \begin{align*}
            S^{*} = \{q^{*}\} : q^{*} = \arg \min_q \delta_q.
        \end{align*}
        Therefore, the required parity-constrained optimal child column (data qubit) assignment should incur an additional flip of
        \begin{align*}
            z_{q^{*}} \gets z_{q^{*}} \oplus 1.
        \end{align*}
        The constrained minimum cost, as well as the message sent from the detector (check) node $c$ to its parent column (data qubit) node $p$ is:
        \begin{align*}
            M_{c \rightarrow p} = C_{\mathrm{min}} + \delta_{q^{*}}.
        \end{align*}
        This completes our claim that a single flip from the unconstrained individual minimum cost assignments of the child columns (data qubits) is sufficient to impose the syndrome constraint, and the resultant child column (data qubit) assignment represents the least cost assignment satisfying the binary constraint of the syndrome bit.
    \end{itemize}
\end{proof}
This specialized implementation of $\mathtt{CheckUpdate}$ is exactly equivalent to the DP program discussed previously.
For any detector (check) node, the parity DP stage computes the child sub-tree column (data qubit) assignment with the least cost given by:
\begin{align*}
&M_{c \rightarrow p}(x_p)
=
\min_{x_q: q\in\mathtt{ch}(c)}
\sum_{q \in \mathtt{ch}(c)}
M_{q \rightarrow c}(x_q),\\
&\text{such that}\,\, \bigoplus\limits_{q\in\mathtt{ch}(c)}x_q
    = y_{\mathrm{req}}.
\notag
\end{align*}

In this section, we discussed a method to obtain the constrained individual minimum cost assignment, which exactly computes this sum without maintaining a two-state message and computing the two dynamic parities.
Instead, the specialized implementation simply chooses an unconstrained individual minimum cost assignment, implementing the unconstrained version of the above equation, and later flips a single column (data qubit) assignment.
This reduces the storage and compute overhead.
The $\mathtt{CheckUpdate}$ uniquely implements the above equation with the cheapest flip implementation by computing the same minimization in closed form:
\begin{align}
    M_{c\to p}(x_p)
    =
    \begin{cases}
    C_{\mathrm{min}},
    &
    y_{\mathrm{child}} = y_{\mathrm{req}},
    \\[0.5em]
    C_{\mathrm{min}}+\min\limits_q \delta_q,
    &
    y_{\mathrm{child}} \neq y_{\mathrm{req}}.
    \end{cases}
\end{align}

\section{Syndrome spanning capabilities of a Tanner Forest}
\label{ap:syndrome-forest}
We now discuss that the forests constructed using sparsified DEMs guarantee syndrome span for the surface codes, but it cannot be guaranteed for the BB codes.
The BB code parity check matrix has $d_v = 3$ regular data qubits.
We mentioned before that the sparse DEM construction from Ref. \cite{demarti2026almost} maps the full DEM into a sparser form where the maximum column weight of the sparse DEM equals the maximum column weight of the parity check matrix of the underlying QLDPC code.
For BB codes, the sparse DEM therefore has columns with maximum weight $3$.
This puts a fundamental restriction on the construction of the Tanner forests that can span the syndrome for BB codes.\\\\
Before arguing these facts, we show the condition that ensures the acyclic property of the forest $F$ during any new addition of a column (data qubit) from the sparse DEM (or, PCM).
Cyclomatic number of an undirected graph $G(V, E)$ counts the number of cycles, which is defined as
\begin{align}
    \beta(G) = |E| - |V| + C,
\end{align}
where $C$ is the number of connected components in the graph.
Suppose a new column of the sparse DEM matrix $D_{:q}$ is soon to be added into the forest $F$.
The weight of the column is the cardinality of its support, i.e. $\mathtt{wt}(D_{:q}) = |\mathtt{supp}(D_{:q})|$.
Before the addition of column $q$, the forest satisfies
\begin{align*}
    \beta(F) = |E_F| - |V_F| + C_F = 0.
\end{align*}
Adding one new column increases $|V_F| \gets |V_F| + 1$, $|E_F| \gets |E_F| + \mathtt{wt}(D_{:q})$.
The $\mathtt{wt}(D_{:q})$ adjacent detector nodes of column $D_{:q}$ belong to $t_q$ separate components before the addition.
Therefore, the addition of $D_{:q}$ into $F$ merges all those components into a single one, i.e. the change in the number of components of the forest F is $C_F \gets C_F - t_q + 1$.
This means the change in cyclomatic number is
\begin{align}
    \Delta \beta(F) = \mathtt{wt}(D_{:q}) - t_q.
    \label{eq: change-cyclomatic-num}
\end{align}
Eq. \eqref{eq: change-cyclomatic-num} implies that the acyclicity of the forest $F$ can only be maintained if 
\begin{align}
    \mathtt{wt}(D_{:q}) = t_q.
    \label{eq: forest-criterion}
\end{align}
This condition means that all the adjacent detectors of column $D_{:q}$ lie in distinct components before the addition.
\\\\
Assume that a forest $F$ is constructed from the sparse DEM of a QLDPC code.
The sparse DEM matrix $H_{\mathrm{sdem}} \in \mathbb{F}^{m \times n}_2$ of the QLDPC code is an $m \times n$ matrix, where $m$ is the number of detectors and $n$ is the total number of error mechanisms.
For a surface code, the columns of $H_{\mathrm{sdem}}$ have a maximum weight of $2$.
We argue that processing the columns of $H_\mathrm{{sdem}}$ in any arbitrary order to construct the forest always ensures $\mathrm{IM}(H_{\mathrm{F}}) = \mathrm{IM}(H_{\mathrm{sdem}})$. 
This property is independent of any methods used for the reliability ordering, i.e., BP-based or noise-perturbed synthetic soft information-based.
Consider any weight $1$ column $D_{:q}$ from $H_{sdem}$.
Such columns touch only a single detector, therefore $\mathtt{wt}(D_{:q}) = t_q = 1$ always satisfy  Eq. \eqref{eq: forest-criterion}.
On the otherhand consider a weight two column with support $\mathtt{supp}(D_{:q}) = \{d_1, d_2\}$.
A rejection of $D_{:q}$ might happen if $d_1$ and $d_2$ already belong to the same connected component of the forest.
However, this implies another aspect.
This component is connected and acyclic, so there must be a unique path from $d_1$ to $d_2$.
Any error mechanism supported on this path is equivalent to a detector violation of $\{d_1, d_2\}$.
Therefore, the syndrome of $D_{:q}$ remains protected.
The effect of the error mechanism $D_{:q}$ is replaced by a path connecting $d_1, d_2$; i.e., the syndrome of every rejected weight-two column lies in the span of the retained forest columns.
Therefore, it implies $\mathrm{IM}(H_{\mathrm{F}}) = \mathrm{IM}(H_{\mathrm{sdem}})$.
This argument does not assume any specific ordered construction of the forest.
Therefore, the graph-like error mechanisms of sparse surface code DEM ensure syndrome-compatible forest constructions in any general scenario.\\\\
We now consider a similar argument for the BB codes.
Consider a weight $3$ column $D_{:q}$ of the BB code sparse DEM that has been rejected during the forest construction.
Let the support of the rejected column be $\mathtt{supp}(h_q) = \{d_1, d_2, d_3\}$, and the rejection happened because at least two of the adjacent detectors $d_1$ and $d_2$ belong to the same connected component, thereby violating the condition of Eq. \eqref{eq: forest-criterion}.
Similar to the previous argument, there should be a unique path between $d_1$ and $d_2$.
However this path spans the syndrome $\{d_1, d_2\}$, not the $\mathtt{supp}(h_q)$.
Therefore, non-graph-like error mechanisms in general do not guarantee that every syndrome can be spanned by a specific ordered forest construction, i.e., in general $\mathrm{IM}(H_\mathrm{F}) \neq \mathrm{IM}(H_{\mathrm{sdem}})$.
In simpler terms, a rejected cycle formation during the Tanner forest construction might remove an independent error mechanism and its syndrome support.

\begin{figure*}[t]
    \centering

    \begin{minipage}[t]{0.48\textwidth}
        \centering
        \includegraphics[width=\linewidth]{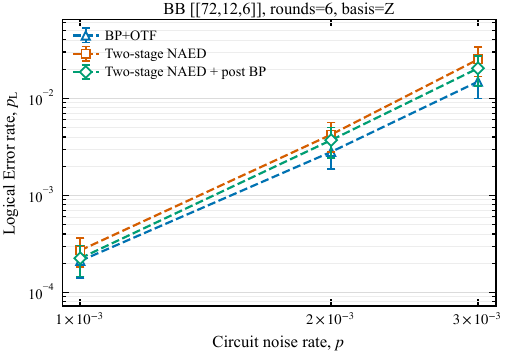}
        \par\smallskip
        \textbf{(a)}
    \end{minipage}
    \hfill
    \begin{minipage}[t]{0.48\textwidth}
        \centering
        \includegraphics[width=\linewidth]{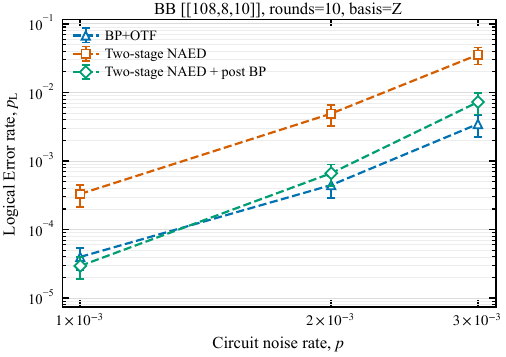}
        \par\smallskip
        \textbf{(b)}
    \end{minipage}

    \vspace{0.8em}

    \begin{minipage}[t]{0.50\textwidth}
        \centering
        \includegraphics[width=\linewidth]{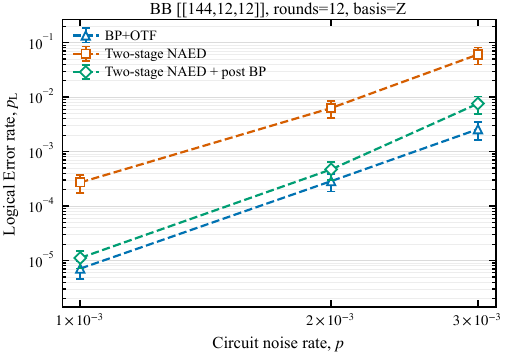}
        \par\smallskip
        \textbf{(c)}
    \end{minipage}

    \caption{
    A circuit-level LER comparison between the two-stage NAED + post-BP and BP+OTF decoder with decimation strategy for the BB codes.
    }
    \label{fig:bb-ler-comp-naed-otf}
\end{figure*}

\section{Circuit level simulation details}
\label{ap:circuit-simulation}
In this appendix, we test the performance of surface and bivariate bicycle codes using a quantum memory experiment.
The noise model we assume is a standard circuit-level Pauli noise model.
It is one of the most realistic noise models used for various benchmarks throughout the literature \cite{geher2024error}.
In general, it assumes that a fault can occur in any component of a stabilizer measurement circuit.
We use the following model specifications for the circuit-level noise:
\begin{itemize}
    \item A single qubit depolarizing error after each single qubit gate, with probability $p$, i.e. $p_X = p_Y = p_Z = \frac{p}{3}$.
    \item Each two-qubit entangling gate is followed by a two-qubit depolarizing error with probability $p$.
    Therefore, each error in $\{I, X, Y, Z\}^{\otimes 2}$ can happen with probability $\frac{p}{15}$.
    \item Each measurement outcome suffers a bit flip with probability $p$.
    \item In each layer of stabilizer circuit, the idling qubits suffer a depolarizing noise with probability $p$, i.e. $p_X = p_Y = p_Z = \frac{p}{3}$.
    \item The prepared initial qubit states are flipped to their orthogonal state with probability $p$.
\end{itemize}
Under this noise model, a single error-correction round is insufficient to achieve good error recovery.
A standard practice is to perform $d$ rounds of error correction for a distance $d$ code.
However, fault-tolerant quantum memory also requires a good measurement schedule for the stabilizers.
We use the rotated memory circuits from \textit{Stim} \cite{gidney2021stim} for the surface code.
For BB codes, we use the circuits from the GitHub repository of the \textit{SlidingWindowDecoding} \cite{gong2024toward}.
Both circuit constructions achieve good fault-tolerant performance under the existing decoders like BP+OSD$0$.\\\\
Circuit-level simulation captures the space-time picture of the errors.
It contains severe effects of error propagation through the components of stabilizer circuits that cannot be captured by the CSS parity-check matrix of a QLDPC code.
The decoding problem for circuit-level simulations is solved over the detector error model (DEM) matrix.
A plethora of research includes the impact of DEM on circuit-level simulations.
The reader can refer to the Refs. \cite{derks2025designing, gidney2021stim, higgott2025sparse}.
A more concise version of decoding circuit-level noise using DEM can be found in Section II. C. of Ref. \cite{demarti2026almost}.
In the benchmark plots comparing various decoder performances, we show the estimates of the logical error rates at different circuit noise error rates.
A logical error occurs when a logical observable obtained from the decoder prediction does not match the actual one.
We perform standard Monte-Carlo simulations to get a good estimate of these logical error rates.

\begin{figure}
    \centering
    \includegraphics[width=0.5\textwidth]{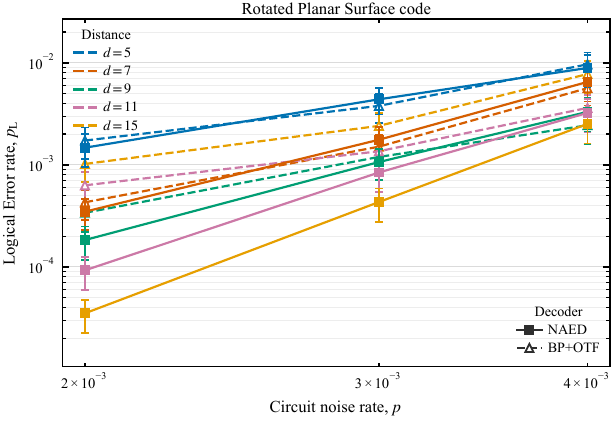}
    \caption{A circuit-level LER comparison between the BP+OTF decoder without decimation and NAED for Rotated planar surface codes.
    }
    \label{fig:surf-ler-comp-naed-bpotf}
\end{figure}

\section{Comparing NAED with BP+OTF Decoder}
We now compare the decoding performances of the NAED with the BP+OTF decoder.
We use BP+OTF to denote the BP+BP+OTF workflow of the decoder mentioned in Algorithm $2$ of Ref. \cite{demarti2026almost}.
For the BP+OTF decoder, we use the same configuration detailed in the reference for BB code simulations.
An ensemble of decoders is created by varying the maximum number of BP iteration caps from $17$ to $391$ in steps of $17$, i.e., a different instance from the ensemble is determined by the maximum BP iteration cap assigned to its initial BP round.
This creates a set of $23$
decoding instances in the ensemble.
Further, the sparse BP round and the BP round on OTF are capped at $113$ iterations (see section VI of Ref. \cite{demarti2026almost}).
The BP+OTF decoder with this set of configurations is compared with the two-stage NAED + post-BP ensemble having the same configuration described in Fig. \ref{fig:bb-ler-comparison}.
The LER comparisons for both decoders have been shown in Fig. \ref{fig:bb-ler-comp-naed-otf}.
Note that the BP+OTF decoder here uses a decimation strategy in the final OTF+BP round.
We assign a decimation parameter of $10^{-9}$ similar to the Ref. \cite{demarti2026almost}.
The decimation strategy assigns a very small probability to the columns that does not belong to the forest.
However, for surface code, if we do not use the decimation strategy and consider complete removal of the column nodes from the Tanner forest and employ the final BP round on a Tanner forest derived by removing the column nodes from the Tanner graph, the BP+OTF significantly degrades in performance.
For the surface code, we show a circuit-level LER plot comparison for the BP+OTF without decimation and NAED in Fig. \ref{fig:surf-ler-comp-naed-bpotf}.
The key reason behind the degraded performance of the BP+OTF decoder under no-decimation is the fact that a decimation parameter set to $0$ for the surface code essentially portrays the non-OTF columns as known nodes with $0$ value assigned to them for the neighboring detector nodes in the message passing rounds of BP.
However, removing them completely skips the information carried by these `zero' valued non-OTF nodes entirely.
However, under the exact inference algorithm of NAED, setting the non-OTF nodes to a $0$ value and removing them entirely is equivalent due to the nature of optimization carried out by the dynamic program of Algorithm \ref{alg:exact-map}.



\bibliography{references}

\end{document}